\definecolor{DarkGreen}{rgb}{0,0.5,0.1} 
\newcommand\soutD{\bgroup\markoverwith
{\textcolor{DarkGreen}{\rule[.5ex]{2pt}{1pt}}}\ULon}
\newcommand\soutP{\bgroup\markoverwith
{\textcolor{blue}{\rule[.5ex]{2pt}{1pt}}}\ULon}
\newcommand{\Hm}[1]{\leavevmode{\marginpar{\tiny
$\hbox to 0mm{\hspace*{-0.5mm}$\leftarrow$\hss}
\vcenter{\vrule depth 0.1mm height 0.1mm width \the\marginparwidth}
\hbox to
0mm{\hss$\rightarrow$\hspace*{-0.5mm}}$\\\relax\raggedright #1}}}
\renewcommand{\d}{\mathrm{d}}
\newcommand{\p}{\partial}
\newtheorem{remark}{Remark}
\newtheorem{conjecture}{Conjecture}
\newtheorem{theorem}{Theorem}
\newtheorem{proposition}{Proposition}
\newtheorem{corollary}{Corollary}
\newtheorem{define}{Definition}
\newtheorem{lemma}{Lemma}
\title{The asymptotic behaviour of the heat equation in a sheared unbounded strip}
\author{Michal Tich\'{y}}
\affil{Department of Mathematics, Faculty of Nuclear Sciences and Physical Engineering, Czech Technical University in Prague}
\begin{document}
\maketitle
\begin{abstract}
 We show that the geometric deformation of shearing yields an improved decay rate for the heat semigroup associated with the Dirichlet Laplacian in an unbounded strip. The proof is based on the Hardy inequality due to the shearing established in \cite{ShearBriet} and the method of self-similar variables and weighted Sobolev spaces for the heat equation.   
\end{abstract}
\textbf{Keywords:} Dirichlet Laplacian, Hardy inequality, Subcriticality, Heat equation, Large-time behaviour of solutions, Sheared strip

\section{Introduction}
This paper is motivated by the following general conjecture of
 Krej\v{c}i\v{r}\'{i}k and Zuazua:
\begin{conjecture}[\cite{TwistedTubesZuazua}, Krej\v{c}i\v{r}\'{i}k and Zuazua]
\label{mainConjecture}
Let $\Omega$ be an open connected subset  of $\mathbb{R}^d.$ Let $H_0$ and $H_+$ be two self-adjoint operators in $L^2(\Omega)$ such that $\inf \sigma (H_0) = \inf \sigma (H_+) =0.$ Assume that there is a positive function $\rho : \Omega \to \mathbb{R}$ such that $H_+ \geq \rho,$ while $H_0 - V$ is a negative operator for any non-negative non-trivial $V \in C_0^\infty (\Omega).$ Then there exists a positive function $K: \Omega \to \mathbb{R}$ such that
\begin{equation*}
    \lim_{t \to \infty} \frac{\|e^{-t H_+}\|_{L^2(\Omega, K) \to L^2(\Omega)}}{\|e^{-t H_0}\|_{L^2(\Omega, K) \to L^2(\Omega)}} = 0.
\end{equation*}
\end{conjecture}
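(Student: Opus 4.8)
The plan is to reduce the statement to a comparison of polynomial decay exponents for the two heat semigroups, obtained through the criticality/subcriticality dichotomy together with the method of self-similar variables announced in the abstract. The guiding principle is that criticality of $H_0$ forces the \emph{slowest} admissible decay, realised along a positive zero-energy ground state, whereas the Hardy inequality $H_+\ge\rho$ with $\rho>0$ certifies that $H_+$ is \emph{subcritical} and hence decays strictly faster; the weight $K$ is then tuned so that both weighted operator norms are finite and expose their respective leading rates.

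First I would treat $H_0$. The hypothesis that $H_0-V$ is negative for every non-trivial non-negative $V\in C_0^\infty(\Omega)$ is precisely the statement that $H_0$ is critical, and the theory of positive solutions then supplies a positive generalised ground state $\phi_0$ solving $H_0\phi_0=0$ in the form sense, unique up to scaling and with $\phi_0\notin L^2(\Omega)$. Performing the ground-state (Doob) transform $u=\phi_0 v$ turns $H_0$ into the generator of a conservative Dirichlet-form semigroup on $L^2(\Omega,\phi_0^2\,\d x)$, whose invariant measure $\phi_0^2\,\d x$ has infinite mass: the hallmark of null-criticality. Passing to self-similar variables and studying the spectrum of the resulting Ornstein--Uhlenbeck-type operator in the weighted space, one reads off a sharp algebraic rate $\|e^{-tH_0}\|_{L^2(\Omega,K)\to L^2(\Omega)}\sim c_0\,t^{-\alpha_0}$, with $\alpha_0$ determined by the growth of $\phi_0^2\,\d x$ at infinity and by a fixed choice of Gaussian-type weight $K$.

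Next I would treat $H_+$. Here the inequality $H_+\ge\rho$ with $\rho>0$ is a Hardy inequality and thus places $H_+$ strictly in the subcritical regime: the bottom of its spectrum is not a resonance, and the positive weight $\rho$ enters the self-similar generator as an extra non-negative multiplication term that shifts its spectrum upward. Repeating the self-similar analysis for $H_+$ against the \emph{same} weight $K$, I would obtain $\|e^{-tH_+}\|_{L^2(\Omega,K)\to L^2(\Omega)}\le c_+\,t^{-\alpha_+}$ with a strictly larger exponent $\alpha_+>\alpha_0$. Combining the two estimates yields $\|e^{-tH_+}\|_{L^2(\Omega,K)\to L^2(\Omega)}/\|e^{-tH_0}\|_{L^2(\Omega,K)\to L^2(\Omega)}\le (c_+/c_0)\,t^{-(\alpha_+-\alpha_0)}\to 0$, which is the claim.

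The hard part is the last strict inequality $\alpha_+>\alpha_0$, and making the whole scheme run from the abstract hypotheses alone. At this level of generality the data do not by themselves fix the exponents $\alpha_0,\alpha_+$, nor do they guarantee a single weight $K$ for which both weighted norms are simultaneously finite and enjoy clean power-law asymptotics; one must quantify how the strict positivity of $\rho$ translates into a spectral gap of the rescaled generator that strictly beats the critical rate, without extra control on the geometry of $\Omega$ at infinity or on the form domains. It is exactly this quantitative link --- between the Hardy weight of the subcritical operator and the acceleration of its heat decay relative to the critical operator --- that resists a fully abstract argument, which is why the conjecture is here established only in the concrete sheared-strip geometry, where the self-similar method and the explicit Hardy inequality of \cite{ShearBriet} render every step rigorous.
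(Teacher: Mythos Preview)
The statement you are addressing is a \emph{conjecture}, not a theorem, and the paper does not prove it in the stated generality; it only establishes the special case $H_0=-\Delta_{\Omega_0}-E_1$ (straight strip) versus $H_+=-\Delta_{\Omega_f}-E_1$ (sheared strip), namely Theorem~\ref{mainResult}. So there is no ``paper's own proof'' of Conjecture~\ref{mainConjecture} to compare against, and your proposal is not a proof either---as you yourself concede in the final paragraph.

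Your outline accurately captures the \emph{philosophy} behind the conjecture and the mechanism by which the paper handles its special case: self-similar variables turn the decay-rate question into a spectral one, criticality of $H_0$ pins the lowest eigenvalue of the rescaled generator at a baseline value, and the Hardy inequality for $H_+$ pushes that eigenvalue strictly higher. But the passages where you write ``one reads off a sharp algebraic rate'' and ``repeating the self-similar analysis\ldots I would obtain $\alpha_+>\alpha_0$'' are exactly where the abstract hypotheses fall short. Without geometric information on $\Omega$ and the operators, there is no canonical self-similar scaling, no identification of a limiting Ornstein--Uhlenbeck operator, no guarantee that a single Gaussian weight $K$ works for both semigroups, and no mechanism to convert $H_+\ge\rho$ into a strict spectral shift of the rescaled generator. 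These are not technicalities to be filled in later; they are the content of the open problem. Your final paragraph correctly diagnoses this, so the honest summary is that you have described the expected strategy and its obstructions, not a proof.
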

In the same paper \cite{TwistedTubesZuazua} the conjecture was proved for a special geometry by showing that a twist of a three-dimensional tube of uniform cross-section yields an improved decay rate for the heat semigroup associated with the Dirichlet Laplacian in the tube with respect to the straight tube. The pioneering work \cite{TwistedTubesZuazua} was followed by a series of papers establishing the validity of the conjecture in various geometric settings 
\cite{DirichletNeumanZuazua}, \cite{withKolb}, \cite{KrejcirikWedges} as well as magnetic environments \cite{KrejcirikMagnetic}, \cite{MagneticCazacu}. 
An alternative version of the conjecture involving
point-wise improved decay rate 
for heat kernels
was stated in \cite{withPinchover} by Fraas, Krej\v{c}i\v{r}\'{i}k and Pinchover. 
The point-wise improvement for twisted tubes was then established by  Grillo, Kova\v{r}\'{i}k and Pinchover in \cite{GrilloKovarik}. In summary, it is expected that the existence of a Hardy inequality for elliptic operators always implies a better decay rate for the generated heat semigroup.
    
It has been shown recently in \cite{ShearBriet} that a Hardy inequality holds 
for the Dirichlet Laplacian in repulsively sheared unbounded strips. In this paper we use this newly established functional inequality and show that Conjecture \ref{mainConjecture} holds in the case of a locally sheared strip.

  The model of \cite{ShearBriet} is characterised by a positive number $d,$ which determines the width of the strip, and a continuous function $f: \mathbb{R} \to \mathbb{R},$ which determines the boundary profile of the strip. The sheared strip $\Omega _f \subset \mathbb{R}^2$ is defined as
\begin{equation*}
    \Omega _f := \{(x,z) \in \mathbb{R}^2 \: | \: f(x) < z < f(x) + d\}.
\end{equation*}
So the strip is built by translating a segment oriented in a constant direction along an unbounded curve in the plane. We see that the boundary of $\Omega _f$ is formed by the curves $x \mapsto (x,f(x))$ and $x \mapsto (x,f(x)+d),$ see Figure \ref{fig:shear}.
\begin{figure}
\centering
\begin{subfigure}{.5\textwidth}
  \centering
  \includegraphics[width=.9\linewidth]{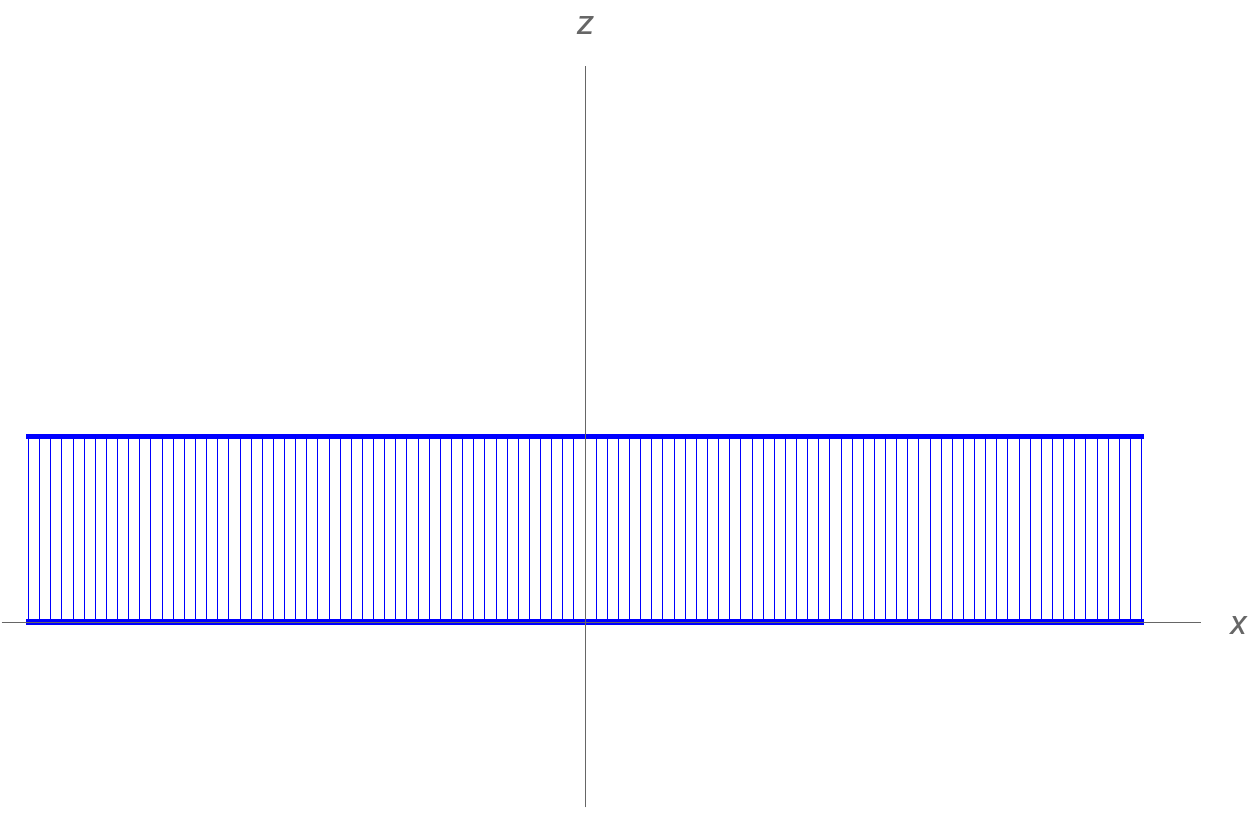}
\end{subfigure}%
\begin{subfigure}{.5\textwidth}
  \centering
  \includegraphics[width=.9\linewidth]{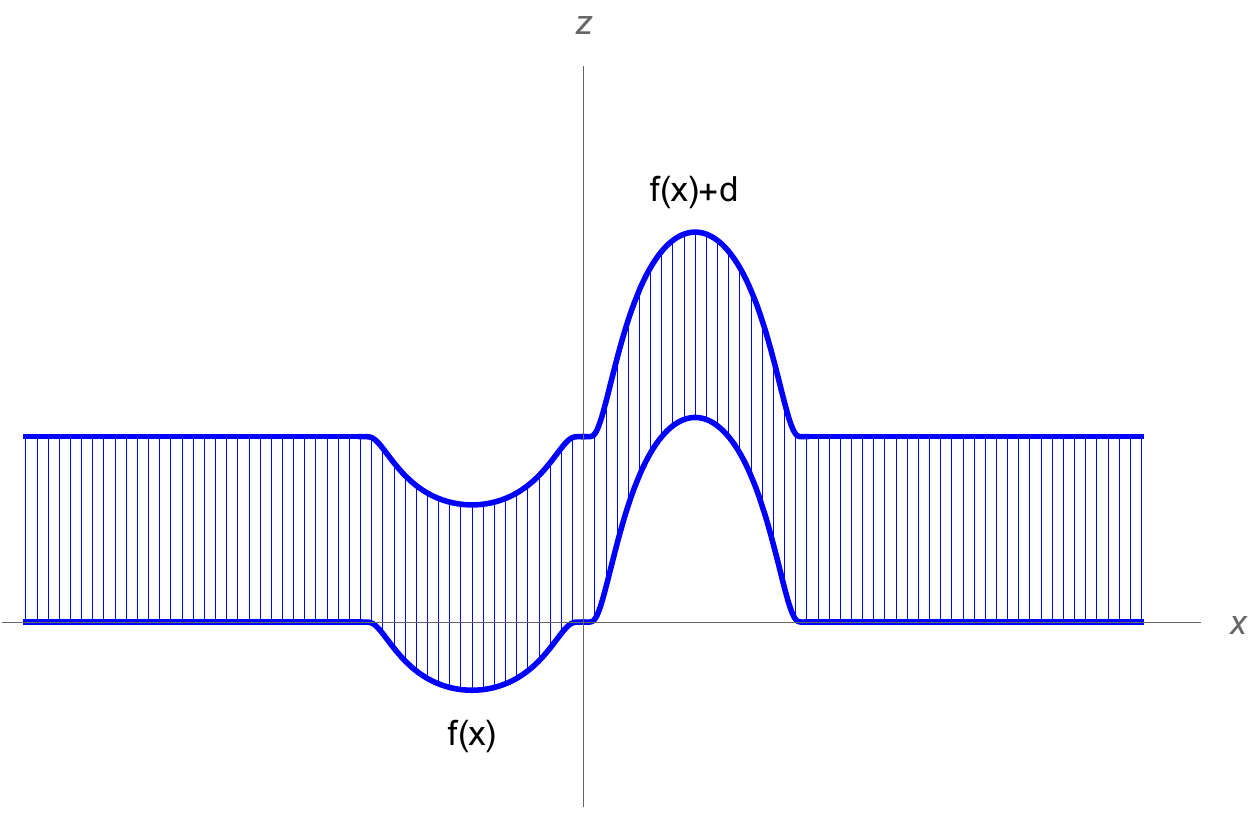}
\end{subfigure}
\caption{The geometry of a sheared strip: a straight strip (left)
and a non-trivially sheared strip (right).}
\label{fig:shear}
\end{figure}
Throughout the paper we assume that $f \in C^{0,1}(\mathbb{R})$ and $f'$ has a compact support.

We consider the heat equation in the sheared strip $\Omega_f$
 \begin{equation}
 \label{heatEquation}
     u_t - \Delta u = 0,
 \end{equation}
 subject to Dirichlet boundary conditions on $\p \Omega_f$ and to the initial condition
 \begin{equation}
 \label{initialConditionU0}
     u(\cdot , 0) = u_0 \in L^2(\Omega_f).
 \end{equation}
Our goal is to show that the solutions of \eqref{heatEquation} converge to the stable equilibrium faster in any non-trivially sheared strip (i.e.\ $f'\not=0$) than in the straight one (corresponding to $f'=0$ identically). The solution to \eqref{heatEquation}--\eqref{initialConditionU0}
 is given by 
 \[u(t) = e^{t \Delta }u_0,\] 
 where $e^{t \Delta }$ is the semigroup operator on $L^2(\Omega_f)$ associated with the Dirichlet Laplacian $-\Delta$ (cf. \cite{SpectralTheoryDavies}, Theorem 5.2.1).  
Under our assumptions, 
it follows from the results established in~\cite{ShearBriet} that
$\sigma(-\Delta)=[E_1,\infty)$,
where $E_1 := \left(\frac{\pi}{d}\right)^2$ is the first eigenvalue of Dirichlet Laplacian on $(0,d)$.
Consequently,
we have for all $t \geq 0$
 \begin{equation*}
     \|e^{t \Delta }\|_{L^2(\Omega_f) \to L^2(\Omega_f)} = e^{-E_1 t}.
 \end{equation*}
 Thus we have the exponential decay rate
 \begin{equation*}
     \|u(t)\|_{L^2(\Omega_f)} \leq e^{-E_1 t}\|u_0\|_{L^2(\Omega_f)},
 \end{equation*}
for each $t \geq 0$ and any initial datum $u_0 \in L^2(\Omega_f).$ Since we are interested in additional time decay properties of the heat semigroup it is natural to consider 
the \textit{shifted} semigroup
\begin{equation*}
    S(t):= e^{t\left(\Delta +E_1\right)},
\end{equation*}
 as an operator from a subspace of $L^2(\Omega_f)$ to $L^2(\Omega_f)$. In this paper we consider the subspace of initial data given by the weighted space
 \begin{equation}
 \label{L2K}
     L^2(\Omega_f,K), \:\text{ where } K(x):= e^{x^2/4}.
 \end{equation}
 As a measure of the additional decay, we consider the polynomial decay rate
\begin{equation*}
    \Gamma (\Omega_f):= \sup \{\Gamma \:| \:\exists \: C_\Gamma >0, \forall t \geq 0, \:  \|S(t)\|_{L^2(\Omega_f,K) \to L^2(\Omega_f)} \leq C_\Gamma (1+t)^{-\Gamma}\}.
\end{equation*}
The main result in this paper reads as follows:
\begin{theorem}
\label{mainResult}
Let  $f \in C^{0,1}(\mathbb{R})$ and $f'$ has compact support. We have
\begin{equation}
   \Gamma(\Omega_f)\begin{cases}
   = 1/4 \quad \mathrm{ if } \: f' =0 \quad \quad \:\mathrm{ (straight \: strip),}\\
   \geq 3/4 \quad \mathrm{ if } \: f' \neq 0 \quad \quad \:\mathrm{ (sheared \: strip).}
   \end{cases}
\end{equation}
\end{theorem}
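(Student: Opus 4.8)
\medskip
\noindent\emph{Strategy of the intended proof.} We combine the Hardy inequality of \cite{ShearBriet} with the method of self-similar variables and weighted Sobolev spaces, in the spirit of \cite{TwistedTubesZuazua}. The first step is to straighten the strip: the shear map $(x,z)\mapsto(x,z-f(x))$ is unitary from $L^2(\Omega_f)$ onto $L^2(\Omega_0)$, $\Omega_0:=\mathbb{R}\times(0,d)$, its Jacobian being $1$, and it intertwines the Dirichlet Laplacian with the self-adjoint operator $H$ on $L^2(\Omega_0)$ generated by the form $Q[u]=\int_{\Omega_0}\bigl(|\partial_x u-f'\,\partial_z u|^2+|\partial_z u|^2\bigr)\,\d x\,\d z$, $u\in H^1_0(\Omega_0)$. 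From \cite{ShearBriet} we use only two facts: $\sigma(H)=[E_1,\infty)$, and, when $f'\not\equiv0$, the subcriticality inequality $Q[u]-E_1\|u\|^2\ge\int_{\Omega_0}\rho\,|u|^2$ for a positive weight $\rho(x)\sim c\,(1+x^2)^{-1}$.

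For the straight strip ($f'=0$) we expand in the transverse Dirichlet modes $\chi_k(z)=\sqrt{2/d}\,\sin(k\pi z/d)$: the shifted semigroup $S(t)$ acts as the one-dimensional heat semigroup $e^{t\partial_x^2}$ on the first mode and, by the spectral gap $E_k-E_1>0$, contracts all higher modes at an exponential rate. Since $L^2(\mathbb{R},K)\hookrightarrow L^1(\mathbb{R})$ and $\|e^{t\partial_x^2}\|_{L^1(\mathbb{R})\to L^2(\mathbb{R})}\sim(1+t)^{-1/4}$, this yields $\Gamma(\Omega_0)\ge1/4$; conversely, testing on $u_0=\psi\otimes\chi_1$ with $\int_{\mathbb{R}}\psi\ne0$ — whose first mode rescales to a Gaussian of mass $\int\psi$, so that $\|e^{t\partial_x^2}\psi\|_{L^2}\sim(1+t)^{-1/4}$ — gives the matching upper bound, hence $\Gamma(\Omega_0)=1/4$.

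For the sheared strip the point is that subcriticality raises the ``effective dimension'' of the problem in the unbounded direction from one to three. Using the Hardy inequality we produce a globally positive generalized solution $h$ of $(H-E_1)h=0$ which, instead of being bounded (as the ``ground state'' $\chi_1$ is in the critical case), grows linearly in $x$ at both ends: outside the (compact) support of $f'$ the equation separates, the first-mode coefficient of $h$ is affine and the higher modes decay exponentially, so $h(x,z)\sim c_\pm|x|\,\chi_1(z)$ as $x\to\pm\infty$ with $c_\pm>0$. The generalized ground-state transformation $u=h\tilde u$ is unitary from $L^2(\Omega_f)$ onto $L^2(\Omega_0,h^2)$ and from $L^2(\Omega_f,K)$ onto $L^2(\Omega_0,h^2e^{x^2/4})$, and it turns the heat equation into a heat-type equation for $\tilde u$ on $L^2(\Omega_0,h^2)$ whose weight $h^2\,\d x\,\d z$ behaves like $x^2\,\d x\,\d z$ at infinity, i.e.\ like the three-dimensional measure in the unbounded variable. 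We then run the self-similar change $y=x/\sqrt{1+t}$, $s=\log(1+t)$, $\tilde u(x,z,t)=(1+t)^{-\beta}v(y,z,s)$: in the $y$-variable the generator of the rescaled evolution is, asymptotically, that of the three-dimensional radial Ornstein--Uhlenbeck semigroup, which after the natural $L^2$-renormalisation has top eigenvalue $-3/4$ in place of the $-1/4$ governing the straight strip; in the $z$-variable there is a singularly perturbed term with a coefficient of order $e^{s}$ forcing $v$ into the transverse ground state; and the genuinely sheared contributions carry negative powers of $1+t$ and are supported, after rescaling, in the shrinking set $\{|y|\le R(1+t)^{-1/2}\}$, hence are remainders. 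Carrying out the spectral/energy analysis in the weighted Sobolev space $L^2(\Omega_0,h^2e^{y^2/4})$ then gives $\|v(s)\|\lesssim e^{-3s/4}$, that is $\|\tilde u(t)\|_{L^2(h^2)}\lesssim(1+t)^{-3/4}\|\tilde u_0\|_{L^2(h^2e^{x^2/4})}$, which by the two isometries reads $\|S(t)u_0\|_{L^2(\Omega_f)}\lesssim(1+t)^{-3/4}\|u_0\|_{L^2(\Omega_f,K)}$, so $\Gamma(\Omega_f)\ge3/4$.

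The steps I expect to be the main obstacle are the construction of $h$ together with the precise linear asymptotics extracted from the Hardy inequality (in particular the positive growth constants $c_\pm$), and the singular-perturbation analysis needed to control the transverse degrees of freedom, the degeneracy of the weight $h^2$ near $y=0$ and near the lateral boundary, and the sheared remainder terms, uniformly in $s$ in the weighted norms; the compact support of $f'$ is precisely what keeps the latter tractable.
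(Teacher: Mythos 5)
Your strategy is genuinely different from the paper's in its key step, and it is viable in principle, but as written it leans on an unproved construction. The paper never performs a ground-state ($h$-)transform. It straightens the strip exactly as you do, then applies the self-similar change of variables $y=xe^{-s/2}$, $s=\log(1+t)$ \emph{directly} to $u$, poses the rescaled evolution in $L^2(\Omega,e^{y^2/4})$, and reduces the decay rate to the behaviour of $\mu(s)=\inf\sigma(T_s^{(0)})$ for a family of rescaled generators. It shows these have compact resolvent (by comparison with the harmonic oscillator tensored with the transverse Dirichlet Laplacian) and then proves norm-resolvent convergence of $T_s^{(0)}$ to the one-dimensional oscillator $l=-\d^2/\d y^2+y^2/16$ \emph{with an extra Dirichlet condition at} $y=0$; the Hardy inequality enters only through the a priori bound $\|\rho_s\psi_s\|^2\lesssim e^{-s}$, which forces the limiting profile to vanish at the origin, lifting the bottom eigenvalue from $1/4$ to $3/4$. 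Your ``effective dimension three'' picture is the same phenomenon seen through the $h$-transform: the half-line radial $3$D Ornstein--Uhlenbeck operator with weight $y^2\,\d y$ is unitarily equivalent to the half-line oscillator with Dirichlet condition at $0$, so both mechanisms land on $3/4$. What your route would buy is a structural explanation (and a bridge to the pointwise heat-kernel results of Grillo--Kova\v{r}\'{\i}k--Pinchover); what the paper's route buys is that it never needs to construct $h$ or to work in a weight that degenerates at $y=0$ and at the lateral boundary. Your treatment of the straight strip (transverse mode expansion, $L^1\to L^2$ bound for $e^{t\partial_x^2}$, and a nonzero-mass test function for the matching bound) is correct and self-contained, whereas the paper simply cites the known result.

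The concrete gap is the sentence ``Using the Hardy inequality we produce a globally positive generalized solution $h$ \dots with $c_\pm>0$.'' The Hardy inequality by itself does not produce such an $h$; subcriticality guarantees positive solutions of minimal growth at each end separately, and what you need is a single global positive solution that is \emph{not} of minimal growth at either end, with strictly positive linear coefficients $c_\pm$ of the first transverse mode on both sides of $\mathrm{supp}\,f'$. Establishing this requires a separate argument (an ODE/positivity analysis of the first-mode coefficient outside $\mathrm{supp}\,f'$ together with a criticality argument excluding $c_+=0$ or $c_-=0$, in the spirit of criticality theory); it is known to be doable in closely analogous settings, but it is real work, not a corollary of \eqref{Hardy}. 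Until that construction and the subsequent spectral analysis in the degenerate weight $h^2e^{y^2/4}$ are carried out, your argument for $\Gamma(\Omega_f)\ge 3/4$ is a programme rather than a proof; the paper's resolvent-convergence route is the way it actually gets closed.
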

Theorem \ref{mainResult} can be reformulated as follows. For every $\Gamma <\Gamma(\Omega_f)$ there exists $ C_\Gamma >0$ such that
\begin{equation}
    \label{mainResultRefolmulated}
    \|u\|_{L^2(\Omega_f)} \leq C_\Gamma (1+t)^{-\Gamma}e^{-E_1t} \|u_0\|_{L^2(\Omega_f,K)},
\end{equation}
for each $t \geq 0$ and any initial datum $u_0 \in L^2(\Omega_f,K).$

The decay rate $1/4$ for straight strips was already proved for example in \cite[\mbox{Proposition~1}]{DirichletNeumanZuazua}. It is the at least three times better improvement for sheared strips which is new here. The proof of Theorem  \ref{mainResult} is based on the method of self-similar solutions developed in the
whole Euclidean space by Escobedo and Kavian \cite{EscobedoKavian} and used to prove Conjecture~\ref{mainConjecture} in many special cases (cf. \cite{TwistedTubesZuazua}, \cite{DirichletNeumanZuazua}, \cite{KrejcirikMagnetic}, \cite{withKolb}, \cite{KrejcirikWedges} and \cite{MagneticCazacu} ).
Using the self-similar transformation we reformulate \eqref{heatEquation} in the weighted space \eqref{L2K} and show that the associated generator has a compact resolvent. Finally we look at the asymptotic behaviour of \eqref{heatEquation} as the self-similar time tends to infinity. The crucial ingredient in the proof is the existence of a Hardy inequality due to \cite{ShearBriet} for the Dirichlet Laplacian in our setting.

The organisation of this paper is as follows. In the following Section 2 we give a precise definition
of the Dirichlet Laplacian in the sheared strip and transform it into the straight strip via curvilinear coordinates. Furthermore, we state the Hardy inequality for locally sheared strip. The main body of the paper is represented by Section 3 where we develop the method of self-similar solutions to get the improved decay rate of Theorem~\ref{mainResult}. Moreover, we  establish an alternative result in Theorem~\ref{alternativeResult}. 
The paper is concluded in Section~\ref{Conclusion} 
by commenting on physical motivations and mentioning some open problems.

\section{Preliminaries}
We consider the Dirichlet Laplacian $-\Delta$  which is introduced standardly as the self-adjoint operator in the Hilbert space $L^2(\Omega_f)$ associated with the quadratic form 
\begin{equation*}
    \mathcal{Q}^{\Omega_f}_D[u]:= \int_{\Omega_f}|\nabla u|^2, \quad \mathcal{D}(\mathcal{Q}^{\Omega_f}_D ):= H^1_0(\Omega_f),
\end{equation*}
using the fact  that $\Omega_f$ is an open set, which will be seen in this section beneath.
We would like to express the Dirichlet Laplacian in the sheared geometry $\Omega _f$ in natural curvilinear coordinates. By denoting $\Omega:= \mathbb{R}\times (0,d)$ we identify $\Omega _f$ with $\mathcal{L}(\Omega),$ where $\mathcal{L}: \mathbb{R}^2 \to \mathbb{R}^2$ is the shear mapping defined by
\begin{equation}
\label{TrafoL}
    \mathcal{L}(x,z):= (x,f(x)+z).
\end{equation}
The corresponding metric has form 
\begin{equation*}
    g:= \nabla \mathcal{L}\cdot (\nabla \mathcal{L})^T = \begin{pmatrix}
    1+f'^2& f' \\
    f'& 1
    \end{pmatrix},
\end{equation*}
where $\cdot$ denotes the matrix product.
It is easy to see that $\det (g)= 1.$
Due to the assumption $f \in C^{0,1}(\mathbb{R}),$ the shear mapping $\mathcal{L}:\Omega \to \Omega_f$ is a local diffeomorphism (cf. \cite[Page 5]{ShearBriet}). Because of the injectivity of $\mathcal{L}$ it is a global diffeomorphism. Therefore, $\Omega_f$ is an open set. So we can identify $\Omega_f$ with the Riemann manifold $(\Omega, g).$

Next, we can define the unitary transformation 
\begin{equation*}
    \mathcal{U}: L^2(\Omega_f) \to L^2(\Omega): \{u \mapsto u \circ \mathcal{L}\}.
\end{equation*}
Thus the Dirichlet Laplacian $-\Delta$ is unitary equivalent and therefore isospectral to the operator
\begin{equation*}
H:= \mathcal{U}\left(-\Delta\right) \mathcal{U}^{-1}   
\end{equation*}
in the Hilbert space $L^2(\Omega).$ This operator is associated with the quadratic form 
\begin{equation*}
    \mathcal{Q}^H[v]:= \mathcal{Q}^{\Omega_f}_D[\mathcal{U}^{-1}v], \quad \mathcal{D}( \mathcal{Q}^H) := \mathcal{U}\:\mathcal{D}(\mathcal{Q}^{\Omega_f}_D).
\end{equation*}
Finally, we overtake from \cite{ShearBriet} the following proposition and state the Hardy-type inequality for the sheared strip in our setting :
\begin{proposition}[\cite{ShearBriet}, Proposition 1]
Let $u \in C_0^\infty(\Omega_f),$ a core of $\mathcal{Q}^{\Omega_f}_D.$ Then $v := u \circ \mathcal{L}$ is compactly supported, $v \in H^1_0(\Omega)$ and 
\begin{equation*}
   \mathcal{Q}^H[v] = \|\p_xv-f'\p_z v\|_{L^2(\Omega)}^2 + \|\p_z v\|_{L^2(\Omega)}^2,  
\end{equation*}
where $f'$ represents the function $f'\otimes1$ on $\mathbb{R}\times (0,d).$
Furthermore, $C_0^\infty(\Omega)$ is a core of $ \mathcal{Q}^H$ and for $f' \in L^\infty(\mathbb{R})$ we have $\mathcal{D}(  \mathcal{Q}^H) = H^1_0(\Omega).$
\end{proposition}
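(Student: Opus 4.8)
The plan is to establish the three assertions in the order stated — the explicit form of $\mathcal{Q}^H$ on the image of $C_0^\infty(\Omega_f)$, the core property of $C_0^\infty(\Omega)$, and the identification $\mathcal{D}(\mathcal{Q}^H)=H^1_0(\Omega)$ — the first being a change of variables and the last two density arguments resting on it.

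First I would fix $u\in C_0^\infty(\Omega_f)$ and set $v:=u\circ\mathcal{L}$. Since $\mathcal{L}$ and $\mathcal{L}^{-1}\colon(x,z)\mapsto(x,z-f(x))$ are Lipschitz and mutually inverse homeomorphisms between $\Omega$ and $\Omega_f$, the set $\operatorname{supp}v=\mathcal{L}^{-1}(\operatorname{supp}u)$ is a compact subset of the \emph{open} set $\Omega$, and $v$ is Lipschitz; hence $v\in W^{1,\infty}(\Omega)$ with compact support, so $v\in H^1_0(\Omega)$. The chain rule for the composition of a smooth function with a Lipschitz map gives, for a.e.\ $(x,z)\in\Omega$,
\[
(\p_z v)(x,z)=(\p_2 u)(\mathcal{L}(x,z)),\qquad (\p_x v)(x,z)=(\p_1 u)(\mathcal{L}(x,z))+f'(x)\,(\p_2 u)(\mathcal{L}(x,z)),
\]
whence $(\p_1 u)\circ\mathcal{L}=\p_x v-f'\p_z v$. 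Since $|\det\nabla\mathcal{L}|=1$ a.e.\ (equivalently $\det g=1$), the change-of-variables formula for bi-Lipschitz maps yields
\[
\mathcal{Q}^{\Omega_f}_D[u]=\int_{\Omega_f}\big(|\p_1 u|^2+|\p_2 u|^2\big)=\int_{\Omega}\big(|\p_1 u|^2+|\p_2 u|^2\big)\circ\mathcal{L}=\|\p_x v-f'\p_z v\|_{L^2(\Omega)}^2+\|\p_z v\|_{L^2(\Omega)}^2,
\]
which is $\mathcal{Q}^H[v]$ by definition. Applying the same computation to $\mathcal{U}^{-1}w=w\circ\mathcal{L}^{-1}\in H^1_0(\Omega_f)$ shows, in addition, that $C_0^\infty(\Omega)\subset\mathcal{D}(\mathcal{Q}^H)$ and that the identity $\mathcal{Q}^H[w]=\|\p_x w-f'\p_z w\|_{L^2(\Omega)}^2+\|\p_z w\|_{L^2(\Omega)}^2$ holds for every $w\in C_0^\infty(\Omega)$.

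For the core property, recall that $C_0^\infty(\Omega_f)$ is a core of $\mathcal{Q}^{\Omega_f}_D$ by the construction of the Dirichlet Laplacian and that $\mathcal{U}$ is a unitary map intertwining the two forms, so $\mathcal{U}\big(C_0^\infty(\Omega_f)\big)$ is a core of $\mathcal{Q}^H$. By the first step each $v=u\circ\mathcal{L}$ in this set is Lipschitz with compact support in $\Omega$, so its mollifications $v_\varepsilon:=\rho_\varepsilon*v$ lie in $C_0^\infty(\Omega)$ for $\varepsilon$ small and satisfy $v_\varepsilon\to v$ in $H^1(\Omega)$. Because $f'\in L^\infty(\mathbb{R})$, the elementary bound $\mathcal{Q}^H[w]+\|w\|^2\le\big(2+2\|f'\|_\infty^2\big)\|w\|_{H^1(\Omega)}^2$ — obtained by expanding $|\p_x w-f'\p_z w|^2$ — promotes $H^1$-convergence to convergence in the form norm of $\mathcal{Q}^H$; hence the form-norm closure of $C_0^\infty(\Omega)$ contains the core $\mathcal{U}(C_0^\infty(\Omega_f))$, and therefore $C_0^\infty(\Omega)$ is itself a core. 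Finally, still for $f'\in L^\infty(\mathbb{R})$, the identity of the first step on $C_0^\infty(\Omega)$ together with the estimates $\big|\|\p_x w\|-\|f'\|_\infty\|\p_z w\|\big|\le\|\p_x w-f'\p_z w\|\le\|\p_x w\|+\|f'\|_\infty\|\p_z w\|$ makes the form norm $(\mathcal{Q}^H[\cdot]+\|\cdot\|^2)^{1/2}$ equivalent to the $H^1(\Omega)$ norm on $C_0^\infty(\Omega)$; taking completions (the former gives $\mathcal{D}(\mathcal{Q}^H)$ by the core property, the latter $H^1_0(\Omega)$ by definition, and both dominate the $L^2(\Omega)$ norm) yields $\mathcal{D}(\mathcal{Q}^H)=H^1_0(\Omega)$.

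I expect the only genuine obstacle to be the core argument: because $f$ is merely Lipschitz, $\mathcal{L}$ is bi-Lipschitz rather than smooth, so $\mathcal{U}(C_0^\infty(\Omega_f))$ consists of compactly supported Lipschitz — not $C^\infty$ — functions, and one must regain smoothness by a mollification that stays compactly supported in $\Omega$ and is controlled in the form norm, which is exactly where the hypothesis $f'\in L^\infty(\mathbb{R})$ enters. The change-of-variables computation and the norm equivalence are otherwise routine, the main care there being to invoke the area formula for bi-Lipschitz maps and the a.e.\ chain rule rather than their $C^1$ counterparts.
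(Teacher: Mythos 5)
Your proof is correct. Note, however, that the paper itself gives no proof of this statement: it is imported verbatim from [ShearBriet, Proposition~1], so there is nothing internal to compare against. Your argument --- the a.e.\ chain rule and area formula for the bi-Lipschitz map $\mathcal{L}$ with $\det g=1$ to get the explicit form, mollification of the compactly supported Lipschitz functions $u\circ\mathcal{L}$ to obtain the core property, and the equivalence of the form norm with the $H^1(\Omega)$ norm under $f'\in L^\infty(\mathbb{R})$ to identify the form domain --- is the standard route and fills in exactly the points (bi-Lipschitz rather than $C^1$ change of variables, loss of smoothness under $\mathcal{U}$) where care is needed.
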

In a distributional sense we have
\begin{equation}
\label{LaplaceCurvilinear}
    H = -(\p_x -f'\p_z )^2-\p_z^2.
\end{equation}
\begin{theorem}[\cite{ShearBriet}, Theorem 4]
\label{TheoremHardy}
Let $f \in C^{0,1}(\mathbb{R})$ and has compact support. If $f' \neq 0,$ then there exists a positive constant $c_H$ such that the inequality
\begin{equation}
\label{Hardy}
    -\Delta  - E_1 \geq \frac{c_H}{1+x^2}
\end{equation}
holds in the sense of quadratic forms in $L^2(\Omega_f).$
\end{theorem}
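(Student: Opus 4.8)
The plan is to prove the statement in the straightened coordinates of the preceding proposition. Since $\mathcal{U}$ is unitary and the weight $(1+x^2)^{-1}$ depends only on the first coordinate, which $\mathcal{L}$ leaves fixed, Theorem~\ref{TheoremHardy} is equivalent to
\begin{equation}
\label{hardyStraight}
\|\p_x v - f'\p_z v\|_{L^2(\Omega)}^2 + \|\p_z v\|_{L^2(\Omega)}^2 - E_1\|v\|_{L^2(\Omega)}^2 \;\geq\; c_H\int_\Omega \frac{|v|^2}{1+x^2}
\end{equation}
for all $v$ in a core of $\mathcal{Q}^H$, say $v\in C_0^\infty(\Omega)$, with $\Omega=\mathbb{R}\times(0,d)$. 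First I would separate the transverse ground state: with $\phi_1(z):=\sqrt{2/d}\,\sin(\pi z/d)$ (so $-\phi_1''=E_1\phi_1$, $\|\phi_1\|_{L^2(0,d)}=1$) write $v=\phi_1\otimes\psi+w$ where $\psi(x):=\int_0^d\phi_1(z) v(x,z)\,\d z$ and $w(x,\cdot)\perp\phi_1$ in $L^2(0,d)$ for a.e.\ $x$; note $\psi\in C_0^\infty(\mathbb{R})$.

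The transverse term is handled by the spectral gap of the Dirichlet Laplacian on $(0,d)$: since $\p_z w(x,\cdot)\perp\phi_1$ and the second eigenvalue is $E_2=4E_1$, one gets $\|\p_z v\|^2-E_1\|v\|^2=\|\p_z w\|^2-E_1\|w\|^2\geq 3E_1\|w\|^2$, and this surplus simultaneously controls a fixed fraction of $\|\p_z w\|^2$. On the ground-state component, $\p_x(\phi_1\psi)-f'\p_z(\phi_1\psi)=\phi_1\psi'-f'\phi_1'\psi$; because $\int_0^d\phi_1\phi_1'\,\d z=0$ and $\int_0^d(\phi_1')^2\,\d z=E_1$, its $L^2(\Omega)$-norm squared is exactly $\int_\mathbb{R}|\psi'|^2+E_1\int_\mathbb{R}|f'|^2|\psi|^2$. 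What remains in $\|\p_x v-f'\p_z v\|^2$ is the square term $\|\p_x w-f'\p_z w\|^2\geq0$ together with the cross terms between $\phi_1\psi$ and $w$; the latter, after integrating in $z$, are supported on $\mathrm{supp}\,f'$ and involve only $\psi,\psi'$ and the $z$-projections of $w,\p_x w,\p_z w$, so by Cauchy--Schwarz and the transverse Poincaré inequality $\|w(x,\cdot)\|^2\leq E_2^{-1}\|\p_z w(x,\cdot)\|^2$ they can be absorbed into $\|\p_x w-f'\p_z w\|^2$, into the transverse surplus, and — with a small coefficient, at the price of shrinking $c_H$ — into $\int_\mathbb{R}|\psi'|^2$. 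Making this absorption work with a constant uniform in $\|f'\|_\infty$ and in the length of $\mathrm{supp}\,f'$ is the main technical point; the cleanest organisation is to use an IMS partition of unity in $x$ adapted to $\mathrm{supp}\,f'$ so that the cross terms live on a fixed compact longitudinal interval. After this, \eqref{hardyStraight} reduces to the one-dimensional inequality
\begin{equation}
\label{hardy1d}
\int_\mathbb{R}|\psi'|^2+\int_\mathbb{R}V|\psi|^2\;\geq\; c\int_\mathbb{R}\frac{|\psi|^2}{1+x^2},\qquad V:=E_1|f'|^2,
\end{equation}
for all $\psi\in C_0^\infty(\mathbb{R})$, where $V\geq0$ is bounded, compactly supported and \emph{nontrivial} — this is exactly where the hypothesis $f'\not\equiv0$ enters.

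Inequality \eqref{hardy1d} expresses the subcriticality of $-\p_x^2+V$ on the line: $-\p_x^2$ alone is critical and carries no positive Hardy weight, but adding a nonnegative nonzero bump makes it subcritical. I would make this quantitative by a ground-state substitution. Taking $\mathrm{supp}\,V\subset[-M,M]$, the solution $h_+$ of $-h''+Vh=0$ that is constant on $(-\infty,-M]$ is convex on $[-M,M]$ with zero slope at $-M$, hence nondecreasing there, hence stays positive and exits $x=M$ with strictly positive slope (strict because $V\not\equiv0$), so it is positive and affine-increasing on $[M,\infty)$; the mirrored construction gives $h_-$, and $h:=h_++h_->0$ grows linearly at both ends, i.e.\ $h(x)\asymp 1+|x|$. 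Writing $\psi=h\varphi$ (so $\varphi\in C_0^\infty$), integration by parts using $h''=Vh$ yields the exact identity $\int_\mathbb{R}(|\psi'|^2+V|\psi|^2)=\int_\mathbb{R}h^2|\varphi'|^2$. Then $\int h^2|\varphi'|^2\gtrsim\int x^2|\varphi'|^2\geq\tfrac14\int|\varphi|^2\gtrsim\int\frac{h^2|\varphi|^2}{1+x^2}=\int\frac{|\psi|^2}{1+x^2}$, where the outer equivalences use $h^2\asymp 1+x^2$ and the middle step is the elementary one-dimensional Hardy bound obtained by integrating $\tfrac{\d}{\d x}(x|\varphi|^2)$ and applying Cauchy--Schwarz; this proves \eqref{hardy1d} and hence \eqref{hardyStraight}.

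The genuine difficulty is therefore not \eqref{hardy1d}, which is soft, but the two-dimensional bookkeeping in passing from \eqref{hardyStraight} to \eqref{hardy1d}: arranging the cross terms from the transverse splitting to be absorbable with a constant independent of the size of $f'$. Should that organisation prove awkward, an equivalent and perhaps more robust route is to first establish a purely local Hardy inequality $H-E_1\geq c_0\,\mathbf{1}_{(-M,M)\times(0,d)}$ by a Poincaré/compactness argument (using that $f'\not\equiv0$ forces the bottom of the relevant Dirichlet problem on the slab to lie strictly above $E_1$), and then to upgrade the characteristic-function weight to $(1+x^2)^{-1}$ by the very same one-dimensional ground-state mechanism applied to $-\p_x^2+c_0\mathbf{1}_{(-M,M)}$.
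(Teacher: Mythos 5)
This theorem is not proved in the paper you were given: it is imported verbatim from \cite{ShearBriet} (Theorem 4 there), so there is no internal argument to compare yours against and I can only judge the proposal on its own terms. Two of your three blocks are sound. The reduction to the straightened strip is correct (the shear fixes the first coordinate, so the weight $(1+x^2)^{-1}$ passes through $\mathcal{U}$ unchanged, and the quoted Proposition gives the form identity), and your one-dimensional endgame is complete and clean: the ground-state solution $h$ of $-h''+Vh=0$ does grow linearly precisely because $V\geq 0$ is nontrivial, the substitution $\psi=h\varphi$ gives the exact identity $\int(|\psi'|^2+V|\psi|^2)=\int h^2|\varphi'|^2$, and $\int x^2|\varphi'|^2\geq\tfrac14\int|\varphi|^2$ finishes it. This correctly isolates where $f'\not\equiv 0$ enters.

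The gap is the two-dimensional reduction, and it is not mere bookkeeping. The dangerous cross term is $-2\mathfrak{R}\int_{\mathbb{R}}f'\,\overline{\psi}\,\bigl(\int_0^d\phi_1'\,\p_x w\,\d z\bigr)\d x$. Cauchy--Schwarz turns it into $\delta\|\p_x w\|^2+\delta^{-1}E_1\int|f'|^2|\psi|^2$; but only a fraction $\delta\leq 3/(3+4\|f'\|_\infty^2)<1$ of $\|\p_x w\|^2$ can be recovered from $\|\p_x w-f'\p_z w\|^2$ together with the transverse surplus $\tfrac34\|\p_z w\|^2$, so the price $\delta^{-1}\int V|\psi|^2$ strictly exceeds the $\int V|\psi|^2$ that the ground-state channel supplies. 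The net longitudinal form is then $\int|\psi'|^2-C\int V|\psi|^2$ with $C>0$, which carries no Hardy weight and need not even be nonnegative; and your alternative of dumping the loss into $\int|\psi'|^2$ plus a localized $-C_\delta\|\psi\|^2_{L^2(\mathrm{supp}f')}$ fails on a wide plateau $\psi\equiv 1$ over $\mathrm{supp}f'$. Integrating by parts in $x$ to move the derivative off $w$ is also unavailable, since $f\in C^{0,1}$ has no second derivative. So the direct absorption, as described, does not close. Your final paragraph, however, is the correct proof, not a fallback: first establish the local inequality $H-E_1\geq c_0\mathbf{1}_{(-M,M)\times(0,d)}$ by a variational/compactness argument (this is where the decomposition $v=\phi_1\psi+w$ is genuinely used, qualitatively rather than quantitatively), and then spread the weight to $(1+x^2)^{-1}$ by exactly your one-dimensional ground-state mechanism applied to $-\p_x^2+c_0\mathbf{1}_{(-M,M)}$. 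That two-step scheme is the standard route for geometric Hardy inequalities of this kind (twisted tubes, ruled strips) and is, in substance, how the cited reference proceeds; I would promote it to the main line and drop the absorption attempt.
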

Denoting $\rho(x):= \frac{1}{\sqrt{1+x^2}},$ the Hardy inequality \eqref{Hardy} means
\begin{equation}
\label{RemarkHardy}
    \|\nabla \psi\|^2_{L^2(\Omega_f)}-E_1\|\psi\|^2_{L^2(\Omega_f)} \geq c_H\|\rho \psi\|^2_{L^2(\Omega_f)},
\end{equation}
for all $\psi \in H_0^1(\Omega_f).$

\section{Self-similarity transformation}
Recalling \eqref{LaplaceCurvilinear} the shifted heat equation in the curvilinear coordinates has form
\begin{equation}
\label{evolutionOriginal}
    u_t+H u-E_1 u = 0 \quad \text{in } \Omega \times (0,\infty),
\end{equation}
subject to the Dirichlet boundary condition on $\p \Omega$ and to the initial condition $u(0) = u_0 \in L^2(\Omega).$ 
The weak formulation has form
\begin{equation}
\label{weakformulationOriginal}
\begin{aligned}
   &\langle v, \dot{u}(t) \rangle + \left((\p _x - f'(x)\p _z)v,(\p _x - f'(x)\p _z)u(t)\right)_{L^2(\Omega)}\\
   &+(\p_z v, \p_z u(t))_{L^2(\Omega)} -E_1 ( v,  u(t))_{L^2(\Omega)} = 0,   
\end{aligned}
\end{equation}
for every $v \in H^1_0(\Omega)$ and a.e. $t \in [0,\infty),$ with $u(0) = u_0 \in L^2(\Omega),$ where $\langle \cdot , \cdot \rangle$ denotes the pairing of $H_0^1(\Omega) $ and $H^{-1}(\Omega).$ We know that the solution $u$ belongs to $C^0([0,\infty),L^2(\Omega))$ by the semigroup theory (cf. \cite[Corollary~ 2.3]{PazySemigroups}).

\subsection{Change of variables}
We perform the so called self-similarity transformation, developed in the whole Euclidean space by Escobedo and Kavian \cite{EscobedoKavian}. This approach was also used in the papers \cite{TwistedTubesZuazua}, \cite{DirichletNeumanZuazua}, \cite{KrejcirikMagnetic}, \cite{withKolb}, \cite{KrejcirikWedges} and \cite{MagneticCazacu}. Following \cite{TwistedTubesZuazua} we perform the self-similarity transformation in the first space variable only: 
\begin{align}
\label{selfSimilarTrafo}
    w(y,z,s) &= s^{s/4}u(s^{s/2}y,z,e^{s}-1),\\
    y &= x e^{-s/2},\\
    s &= \ln(1+t).
\end{align}
Consequently, in our case the self-similar transformation is a  unitary transformation $U$ on $L^2(\Omega)$ which maps every solution $u \in L^2_{\text{loc}}\left((0,\infty), \d t; L^2(\Omega, \d x \d z)\right)$ to a solution $w:= Uu$ in a new $s$-time weighted space $L^2_{\text{loc}}\left((0,\infty), e^s\d s; L^2(\Omega, \d y \d z)\right).$
It is easy to check that, in the new variables, the evolution \eqref{evolutionOriginal}
 is described by 
\begin{equation}
\label{evolutionSelfsimilar}
    \p_s w - \frac{1}{2}y\p_y w - (\p_y-\sigma_s\p_z)^2w - e^s\p^2_{z}w - E_1 e^s w - \frac{1}{4}w = 0,
\end{equation}
where $\sigma_s(y) := e^{s/2}f'(e^{s/2}y)$ and $(y,z)$ play the role of space variables and $s$ plays the role of a new time.
More precisely the weak formulation \eqref{weakformulationOriginal} is transferred into
\begin{equation}
\label{weakformulationSeflsimilar}
    \bigg\langle v, \dot{w}(s)-\frac{1}{2}y\p_yw(s) \bigg\rangle + \mathcal{Q}_s(v,w(s))-E_1 e^s(v,w(s))_{L^2(\Omega)}=0,
\end{equation}
for every $v \in H_0^1(\Omega)$ and a.e. $s \in [0,\infty),$ with $w(0) = w_0:= U u_0 = u_0,$ where the sesquilinear form $\mathcal{Q}_s(\cdot, \cdot)$ is associated with 
\begin{equation}
    \mathcal{Q}_s[w]:= \|\p_yw-\sigma_s\p_z w\|^2_{L^2(\Omega)}+e^s\|\p_z w\|^2_{L^2(\Omega)}- \frac{1}{4}\|w\|^2_{L^2(\Omega)},
\end{equation}
for every $w \in \mathcal{D}(\mathcal{Q}_s):= H^1_0(\Omega).$
\subsection{The evolution in the natural weighted space}
Since the unitary transformation $U$ on $L^2(\Omega)$ preserves the space norm of solutions \eqref{weakformulationOriginal} and \eqref{weakformulationSeflsimilar},  we can analyze the asymptotic time behaviour in the new variables. However, the natural space for studying the evolution described by \eqref{weakformulationSeflsimilar} is the weighted space $L^2(\Omega, e^{y^2/4}\d y \d z)$ instead of $L^2(\Omega)$.
\begin{define}
For $k \in \mathbb{Z}$ we define the weighted space
\begin{equation}
    \mathcal{H}_k:= L^2(\Omega, K^k(y)\d y \d z),
\end{equation}
where $K(y) = e^{y^2/4}.$
In the similar way we define the weighted Sobolev space
\begin{equation}
 \mathcal{H}^1_k:= H^1_0(\Omega, K^k(y)\d y \d z),   
\end{equation}
as a closure of $C^\infty_0(\Omega)$ with respect to the norm $(\|\cdot\|^2_{\mathcal{H}_k} + \|\nabla \cdot\|^2_{\mathcal{H}_k})^{1/2}.$ Finally, we define $\mathcal{H}_k^{-1}$ as a dual space of $\mathcal{H}_k^{1}.$
\end{define}
As a  next step, we want to reconsider the evolution \eqref{evolutionSelfsimilar} as a problem posed in the weighted space $\mathcal{H}_1$ instead of $\mathcal{H}_0 = L^2(\Omega).$ We start with a formal calculation. In the equation \eqref{weakformulationSeflsimilar} we choose $\tilde{v}(y,z) = e^{y^2/4}v(y,z)$ as a test function, where $v \in C_0^\infty(\Omega)$ is arbitrary. The sesquilinear form $\mathcal{Q}_s(\tilde{v},w(s))$ reads
\begin{align}
\label{formalComputing}
    \mathcal{Q}_s(\tilde{v},w(s)) &= (\p_y v-\sigma_s\p_z v,\p_y w(s)-\sigma_s\p_z w(s))_{\mathcal{H}_1}+\bigg( \frac{1}{2}yv,\p_y w(s)-\sigma_s\p_z w(s)\bigg)_{\mathcal{H}_1}  \\
    &+ e^s(\p_z v,\p_z w(s))_{\mathcal{H}_1} -\frac{1}{4}(v,w(s))_{\mathcal{H}_1},
\end{align}
which means that the weak formulation of evolution \eqref{weakformulationSeflsimilar} in the weighted space $\mathcal{H}_1$  has the form
\begin{equation}
\label{weakformulationH1}
    \langle v, \dot{w}(s) \rangle + a_s(v,w(s)) = 0,
\end{equation}
where $\langle \cdot, \cdot \rangle$ denotes the pairing of $\mathcal{H}^{1}_1$ and $\mathcal{H}^{-1}_1$ and
\begin{align*}
      a_s(v,w) &=  (\p_y v-\sigma_s\p_z v,\p_y w-\sigma_s\p_z w)_{\mathcal{H}_1}-\frac{1}{2}( yv,\sigma_s\p_z w)_{\mathcal{H}_1} \\&+ e^s(\p_z v, \p _z w)_{\mathcal{H}_1} -E_1e^s( v, w)_{\mathcal{H}_1} -\frac{1}{4}(v,w)_{\mathcal{H}_1}.
\end{align*}
We see that the form $a_s$ is not symmetric. Next, we show that the problem \eqref{weakformulationH1} is well posed in $\mathcal{H}_1$ and also the solution solves the transformed original problem \eqref{weakformulationSeflsimilar}:
\begin{proposition}
\label{propositionUniq}
Let $w_0 \in \mathcal{H}_1$ be an arbitrary function. Then there exists an unique function $w$ such that:
\begin{enumerate}
    \item $w \in L^2_{\text{loc}}\left((0, \infty); \mathcal{H}^1_1\right)\cap C^0([0,\infty),\mathcal{H}_1),$
    \item $\dot{w} \in L^2_{\text{loc}}\left((0, \infty); \mathcal{H}^{-1}_1\right), $
    \item $w$ satisfies 
    $\langle v, \dot{w}(s) \rangle + a_s(v,w(s)) = 0$
    for all $v \in \mathcal{H}^1_1$ and a.e.  $s \in [0, \infty),$
    \item $w(0) = w_0.$
\end{enumerate}
\end{proposition}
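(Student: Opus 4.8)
The plan is to recognise \eqref{weakformulationH1} as an abstract linear parabolic Cauchy problem posed in the Gelfand triple $\mathcal{H}_1^{1}\hookrightarrow\mathcal{H}_1\hookrightarrow\mathcal{H}_1^{-1}$ (a separable Hilbert space, continuously and densely embedded in $\mathcal{H}_1$ since it contains $C_0^\infty(\Omega)$, with dual $\mathcal{H}_1^{-1}$), and to invoke the classical variational existence-and-uniqueness theorem of J.-L.~Lions for such problems. Because the coefficients $e^{s}$ and $\sigma_s(y)=e^{s/2}f'(e^{s/2}y)$ entering $a_s$ grow as $s\to\infty$, no estimate can be uniform over all of $[0,\infty)$; instead I would verify the hypotheses of Lions' theorem on an arbitrary finite interval $[0,T]$, obtain a unique solution there, and then glue these solutions together using uniqueness to get the global solution with properties 1--4. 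Concretely, for each $T>0$ one must check: (i) $s\mapsto a_s(v,w)$ is measurable on $[0,T]$ for all $v,w\in\mathcal{H}_1^{1}$; (ii) a boundedness bound $|a_s(v,w)|\le M_T\|v\|_{\mathcal{H}_1^{1}}\|w\|_{\mathcal{H}_1^{1}}$ with $M_T$ independent of $s\in[0,T]$; and (iii) a G\aa{}rding coercivity bound $\operatorname{Re}a_s(v,v)+\lambda_T\|v\|_{\mathcal{H}_1}^{2}\ge\alpha\|v\|_{\mathcal{H}_1^{1}}^{2}$ with $\alpha>0$ independent of $s\in[0,T]$.

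The only term in $a_s$ that is not manifestly controlled by the $\mathcal{H}_1^{1}$-norm is $-\tfrac12(yv,\sigma_s\p_zw)_{\mathcal{H}_1}$, because of the unbounded factor $y$. I would first dispose of this by establishing the weighted inequality $\|yv\|_{\mathcal{H}_1}^{2}\le16\|\p_yv\|_{\mathcal{H}_1}^{2}$ for $v\in C_0^{\infty}(\Omega)$, and hence for $v\in\mathcal{H}_1^{1}$ by density: writing $\p_y\bigl(K^{1/2}v\bigr)=K^{1/2}\bigl(\p_yv+\tfrac{y}{4}v\bigr)$ (the weight $K^{1/2}$ being tailored so that $\p_yK^{1/2}=\tfrac{y}{4}K^{1/2}$), expanding $0\le\|\p_y(K^{1/2}v)\|_{L^2(\Omega)}^{2}$ and integrating the cross term by parts in $y$ yields the claim and, more generally, the continuous embedding of $\mathcal{H}_1^{1}$ into $L^2\bigl(\Omega,(1+y^2)K\,\d y\,\d z\bigr)$ (the same embedding that will later deliver compactness of the resolvent). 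With this in hand, (ii) follows term by term from $\|\sigma_s\|_{\infty}\le e^{T/2}\|f'\|_{\infty}$, $e^{s}\le e^{T}$ and $\|yv\|_{\mathcal{H}_1}\le4\|\p_yv\|_{\mathcal{H}_1}$; and for (i) one checks that on the dense set $C_0^{\infty}(\Omega)$ the map $s\mapsto a_s(v,w)$ is in fact continuous (the scaling $\sigma_s$ tested against a fixed compactly supported integrand depends continuously on $s$ by dominated convergence after the substitution $u=e^{s/2}y$), and then extends measurability to all of $\mathcal{H}_1^{1}$ by density together with (ii).

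The coercivity estimate (iii) is the heart of the argument and the place where the lack of symmetry of $a_s$ is handled. On the diagonal, the crucial observation is that the first-order term contributes nothing to the real part:
\[
  \operatorname{Re}\Bigl(-\tfrac12(yv,\sigma_s\p_zv)_{\mathcal{H}_1}\Bigr)
  =-\tfrac14\int_{\Omega}y\,\sigma_s(y)\,K(y)\,\p_z\!\bigl(|v|^{2}\bigr)\,\d y\,\d z=0,
\]
since $y\mapsto y\,\sigma_s(y)\,K(y)$ does not depend on $z$ and $v$ vanishes on $\p\Omega=\mathbb{R}\times\{0,d\}$; hence $\operatorname{Re}a_s(v,v)=\|\p_yv-\sigma_s\p_zv\|_{\mathcal{H}_1}^{2}+e^{s}\|\p_zv\|_{\mathcal{H}_1}^{2}-\bigl(E_1e^{s}+\tfrac14\bigr)\|v\|_{\mathcal{H}_1}^{2}$. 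Pointwise in $(y,z)$ the quadratic form $(p,q)\mapsto|p-\sigma_s(y)q|^{2}+e^{s}|q|^{2}$ has matrix $\left(\begin{smallmatrix}1&-\sigma_s(y)\\-\sigma_s(y)&\sigma_s(y)^{2}+e^{s}\end{smallmatrix}\right)$, with determinant $e^{s}$ and trace $1+\sigma_s(y)^{2}+e^{s}$, so its smallest eigenvalue is at least $e^{s}/\bigl(1+\sigma_s(y)^{2}+e^{s}\bigr)\ge(2+\|f'\|_{\infty}^{2})^{-1}$ for every $s\ge0$ and $y\in\mathbb{R}$ (using $\sigma_s(y)^{2}\le e^{s}\|f'\|_{\infty}^{2}$ and $1\le e^{s}$). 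Integrating this pointwise bound against $K$ gives $\|\p_yv-\sigma_s\p_zv\|_{\mathcal{H}_1}^{2}+e^{s}\|\p_zv\|_{\mathcal{H}_1}^{2}\ge(2+\|f'\|_{\infty}^{2})^{-1}\bigl(\|\p_yv\|_{\mathcal{H}_1}^{2}+\|\p_zv\|_{\mathcal{H}_1}^{2}\bigr)$, so (iii) holds with $\alpha=(2+\|f'\|_{\infty}^{2})^{-1}$ and $\lambda_T=E_1e^{T}+\tfrac14+\alpha$.

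With (i)--(iii) verified on every $[0,T]$, the variational parabolic theory produces, for each $T$, a unique $w$ in $L^{2}(0,T;\mathcal{H}_1^{1})\cap C^{0}([0,T];\mathcal{H}_1)$ with $\dot w\in L^{2}(0,T;\mathcal{H}_1^{-1})$ satisfying item~3 on $[0,T]$ and $w(0)=w_0$; uniqueness makes these restrictions consistent, so they assemble into a single $w$ on $[0,\infty)$ with properties 1--4, the continuity $w\in C^{0}([0,\infty);\mathcal{H}_1)$ being the standard consequence of $w\in L^{2}_{\mathrm{loc}}(\mathcal{H}_1^{1})$ and $\dot w\in L^{2}_{\mathrm{loc}}(\mathcal{H}_1^{-1})$. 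Finally, testing \eqref{weakformulationSeflsimilar} with $\tilde v=e^{y^{2}/4}v$ for $v\in C_0^{\infty}(\Omega)$ (admissible because then $\tilde v\in H_0^{1}(\Omega)$) and running the computation \eqref{formalComputing} backwards shows that this same $w$ solves the original self-similar weak formulation, as asserted in the text preceding the statement. The main obstacle is precisely estimate (iii): without the cancellation of the first-order term one would have to fight the unbounded growth of $\sigma_s$, and it is the combination of that cancellation with the fact that the principal symbol of the $y$-$z$ second-order part has the harmless determinant $e^{s}$ that yields coercivity constants which, although they degenerate as $T\to\infty$, remain uniform in $s$ on each fixed $[0,T]$.
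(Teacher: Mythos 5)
Your proposal is correct and follows essentially the same route as the paper: Lions' variational theorem on each finite interval $[0,s_0]$ glued by uniqueness, the weighted bound $\|yv\|_{\mathcal{H}_1}\leq 4\|\p_y v\|_{\mathcal{H}_1}$ to make $a_s$ well defined, and the crucial cancellation $\mathfrak{R}(yv,\sigma_s\p_z v)_{\mathcal{H}_1}=0$ by integration by parts in $z$. The only divergence is in the G\aa rding estimate, where your pointwise $2\times2$ determinant/trace bound on the principal symbol (absorbing $-E_1e^s\|v\|^2_{\mathcal{H}_1}$ wholesale into the $s_0$-dependent constant) replaces the paper's $\epsilon$-Young plus cross-sectional Poincar\'e argument; both are valid since the lower-order constant may depend on $s_0$.
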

In the proof we use the following theorem:
\begin{theorem}[\cite{Lions}, Chapter 3, Theorem 4.1]
\label{theoremLions}
Let $\mathcal{H}$ be a Hilbert space with the scalar product $(\cdot , \cdot)_\mathcal{H}$ and the norm $\|\cdot\|_\mathcal{H},$ where the dual space $\mathcal{H}^*$ is identified with $\mathcal{H}$. Let $\mathcal{V}$ be a separable Hilbert space with the norm $\|\cdot\|_\mathcal{V}$ and let $\mathcal{V} \subset  \mathcal{H}$ with dense and continuous injection, thus $\mathcal{V} \subset \mathcal{H} \subset \mathcal{V}^*.$ We have a continuous sesquilinear form $a_t(u,v): \mathcal{V} \times \mathcal{V} \to \mathbb{C}$ for a.e. $t \in [0,T],$ where $T>0$ is fixed, which satisfies the following properties:
\begin{enumerate}
    \item $\forall u, v \in \mathcal{V},$ the function $t \mapsto a_t(u,v)$ is measurable,
    \item $|a_t(u,v)| \leq C\|u\|_\mathcal{V}\|v\|_\mathcal{V}$ for a.e. $t \in [0,T]$ and $\forall u, v \in \mathcal{V},$
    \item $\mathfrak{R}\{a_t(u,u)\} \geq c_1\|u\|^2_\mathcal{V} - c_2\|u\|^2_\mathcal{H}$ for a.e. $t \in [0,T]$ and $\forall u \in \mathcal{V},$
\end{enumerate}
where $C,c_1, c_2$ are constants and $c_1 >0.$ 
Then for given $f \in L^2((0,T);\mathcal{V}^*)$ and $u_0 \in \mathcal{H}$ there exists a unique function $u$ satisfying 
\begin{enumerate}
    \item $u \in L^2((0,T);\mathcal{V})\cap C([0,T];\mathcal{H}),$
    \item $\dot{u} \in L^2((0,T);\mathcal{V}^*),$
    \item $\langle v,\dot{u}(t)\rangle + a_t(v,u(t)) = \langle v, f(t) \rangle$ for a.e. $t \in (0,T),$ and $\forall v \in \mathcal{V},$
    \item $u(0) = u_0.$
\end{enumerate}
\end{theorem}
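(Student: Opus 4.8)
The plan is to prove Theorem~\ref{theoremLions} by the Faedo--Galerkin method, combining a finite-dimensional approximation with energy estimates and a weak-compactness argument. First I would fix a sequence $\{e_k\}_{k=1}^\infty\subset\mathcal{V}$ that is linearly independent and whose span is dense in $\mathcal{V}$ (possible since $\mathcal{V}$ is separable), and set $\mathcal{V}_m:=\mathrm{span}\{e_1,\dots,e_m\}$. I would seek an approximate solution $u_m(t)=\sum_{k=1}^m g_{k,m}(t)\,e_k$ determined by the finite system
\[
(e_j,\dot u_m(t))_{\mathcal{H}}+a_t(e_j,u_m(t))=\langle e_j,f(t)\rangle,\qquad j=1,\dots,m,
\]
with $u_m(0)=u_{0,m}$, where $u_{0,m}\to u_0$ in $\mathcal{H}$. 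Because the Gram matrix $(e_j,e_k)_{\mathcal{H}}$ is invertible and, by hypotheses~1 and~2, the coefficients $t\mapsto a_t(e_j,e_k)$ are measurable and essentially bounded, this is a linear ODE system with $L^\infty$ coefficients and a right-hand side in $L^2(0,T)$, uniquely solvable on $[0,T]$ by the Carath\'eodory existence theorem.

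The second step is the a priori estimate. Testing the Galerkin system with $u_m(t)$ itself (multiplying the $j$-th equation by $g_{j,m}(t)$ and summing) and taking real parts gives
\[
\tfrac12\tfrac{\d}{\d t}\|u_m(t)\|_{\mathcal{H}}^2+\mathfrak{R}\,a_t(u_m(t),u_m(t))=\mathfrak{R}\,\langle u_m(t),f(t)\rangle.
\]
Invoking the weak coercivity (G\aa{}rding-type) estimate of hypothesis~3 and absorbing the right-hand side by Young's inequality, $|\langle u_m,f\rangle|\le\tfrac{c_1}{2}\|u_m\|_{\mathcal{V}}^2+\tfrac{1}{2c_1}\|f\|_{\mathcal{V}^*}^2$, I would obtain a differential inequality to which Gr\"onwall's lemma applies. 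This yields bounds, uniform in $m$, for $\{u_m\}$ in $L^\infty((0,T);\mathcal{H})\cap L^2((0,T);\mathcal{V})$ depending only on $\|u_0\|_{\mathcal{H}}$ and $\|f\|_{L^2((0,T);\mathcal{V}^*)}$.

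The third step is the passage to the limit and uniqueness. By reflexivity I would extract a subsequence with $u_m\rightharpoonup u$ weakly in $L^2((0,T);\mathcal{V})$ and weakly-$*$ in $L^\infty((0,T);\mathcal{H})$. To identify $u$ as a solution without an a priori bound on $\dot u_m$, I would multiply the $j$-th Galerkin equation by a scalar $\psi\in C^1([0,T])$ with $\psi(T)=0$, integrate over $(0,T)$, and integrate the time-derivative term by parts so that only $u_m$ (not $\dot u_m$) appears, the boundary term producing $(e_j,u_{0,m})_{\mathcal{H}}\psi(0)$. Each resulting term is a bounded linear functional of $u_m$ on $L^2((0,T);\mathcal{V})$, so passing to the limit and using density of $\bigcup_m\mathcal{V}_m$ in $\mathcal{V}$ gives the integrated weak identity for every $v\in\mathcal{V}$. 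Choosing $\psi$ compactly supported in $(0,T)$ shows that $\langle v,\dot u(t)\rangle=\langle v,f(t)\rangle-a_t(v,u(t))$ for all $v\in\mathcal{V}$ and a.e.\ $t$, which exhibits $\dot u(t)\in\mathcal{V}^*$ and yields $\dot u\in L^2((0,T);\mathcal{V}^*)$; the Lions--Magenes embedding $L^2((0,T);\mathcal{V})\cap H^1((0,T);\mathcal{V}^*)\hookrightarrow C([0,T];\mathcal{H})$ then gives $u\in C([0,T];\mathcal{H})$, and comparing the boundary term against general $\psi$ recovers $u(0)=u_0$. Uniqueness follows by applying the same energy identity and Gr\"onwall's lemma to the difference of two solutions with $f=0$ and $u_0=0$, using the integration-by-parts formula $\tfrac{\d}{\d t}\|u\|_{\mathcal{H}}^2=2\mathfrak{R}\langle u,\dot u\rangle$, valid precisely because $u\in C([0,T];\mathcal{H})$ with $\dot u\in L^2((0,T);\mathcal{V}^*)$.

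I expect the main obstacle to be this passage to the limit together with the recovery of the time regularity and the initial datum: since the $\mathcal{H}$-orthogonal Galerkin projections onto $\mathcal{V}_m$ need not be uniformly bounded on $\mathcal{V}$, one cannot estimate $\dot u_m$ directly in $L^2((0,T);\mathcal{V}^*)$, and must instead work in the integrated-by-parts (transposed) formulation and reconstruct $\dot u$ and $u(0)$ a posteriori, leaning on the Lions--Magenes continuity lemma. By contrast, the energy estimates of the second step and the uniqueness argument are routine consequences of hypotheses~2--3 and Gr\"onwall's inequality.
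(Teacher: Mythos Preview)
The paper does not contain its own proof of this statement: Theorem~\ref{theoremLions} is quoted verbatim from Lions' monograph (Chapter~3, Theorem~4.1) and used as a black box in the proof of Proposition~\ref{propositionUniq}. There is therefore nothing in the paper to compare your argument against.

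That said, your Faedo--Galerkin outline is the standard route to this result and is essentially what one finds in Lions' book. The three stages you describe --- finite-dimensional approximation via a total sequence in $\mathcal{V}$, uniform energy bounds from the G\aa{}rding inequality plus Gr\"onwall, and weak-limit identification with a posteriori recovery of $\dot u\in L^2((0,T);\mathcal{V}^*)$ and of the initial datum --- are correct and in the right order. Your remark that one cannot bound $\dot u_m$ directly in $L^2((0,T);\mathcal{V}^*)$ because the Galerkin projections need not be uniformly bounded from $\mathcal{V}^*$ to $\mathcal{V}^*$, and that one must instead pass to the limit in the time-integrated identity, is exactly the technical point that distinguishes a careful proof from a formal one. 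The only minor omission is that the integration-by-parts formula $\tfrac{\d}{\d t}\|u\|_{\mathcal{H}}^2=2\,\mathfrak{R}\,\langle u,\dot u\rangle$ you invoke for uniqueness is itself part of the Lions--Magenes lemma and should be stated as such rather than assumed; once that lemma is in hand, your uniqueness argument goes through.
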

\begin{proof}[Proof of Proposition \ref{propositionUniq}]
The proof is inspired by the proof of Proposition 5.1 from the paper \cite{TwistedTubesZuazua}. First, we show that the sesquilinear form $a_s$ is well defined with the domain $\mathcal{D}(a_s):= \mathcal{H}^1_1$ for any fixed $s \in [0,\infty)$ and thus it is continuous. Using the fact that $\sigma_s$ is bounded  for every finite $s$ we only have to show that for every $ v \in \mathcal{H}^1_1$ we have $ y v \in \mathcal{H}_1.$ For $v \in C^\infty_0(\Omega)$ we obtain
\begin{align*}
    \|y v\|_{\mathcal{H}_1}^2 &= 2\int_\Omega y|v(y, z)|^2\frac{\d (e^{y^2/4})}{\d y} \d y \d z   \\
    &=  -2\int_\Omega \left(|v|^2 + 2y\mathfrak{R}[\bar{v}\p_y v]\right)e^{y^2/4}\d y \d z  \\
    &\leq 4\bigg|\int_\Omega y\mathfrak{R}[\bar{v}\p_y v] e^{y^2/4}\d y \d z \bigg|  \\
    &\leq 4(|yv|, |\p_yv|)_{\mathcal{H}_1} \\
    &\leq 4 \|yv\|_{\mathcal{H}_1} \|\p_yv\|_{\mathcal{H}_1},
\end{align*}
where we used the Cauchy-Schwarz inequality in the last estimate. Consequently,
\begin{equation}
\label{estimateYV}
   \|yv\|_{\mathcal{H}_1} \leq 4  \|\p_yv\|_{\mathcal{H}_1} \leq  4  \|v\|_{\mathcal{H}^1_1}.
\end{equation}
Next, the inequality can be extended to all $v \in \mathcal{H}^1_1$ by density argument. Therefore, the sesquilinear form $a_s$ is well defined for every $ s \geq 0$ and $ \forall v,w \in \mathcal{H}^1_1.$ To prove the rest of the proposition we use Theorem \ref{theoremLions} which was mentioned above. In our case $\mathcal{V} = \mathcal{H}^1_1$ and $\mathcal{H} = \mathcal{H}_1.$ We show that $a_s(\cdot , \cdot)$ satisfies all three assumptions of Theorem \ref{theoremLions}.
First, it is easy to see that the function $s\mapsto a_s(v,w)$ is continuous on $[0,\infty),$ $\forall v,w \in \mathcal{H}^1_1,$ therefore, it is also measurable on $[0,\infty).$
    Furthermore, for $s_0 \in [0,\infty)$ fixed we get:
    \begin{align*}
        &|(\p_yv-\sigma_{s_0} \p_z v, \p_yw-\sigma_{s_0} \p_z w )_{\mathcal{H}_1}| \leq (1+\|\sigma_{s_0}\|_{L^\infty(\Omega)})\|v\|_{\mathcal{H}^1_1}\|w\|_{\mathcal{H}^1_1},
    \end{align*}
    where we used the Cauchy-Schwarz inequality and the boundedness of $f'.$ Next, we again use the Cauchy-Schwarz inequality and also the inequality \eqref{estimateYV} from above:
    \begin{align*}
        |(yv,\sigma_{s_0}\p_z w)_{\mathcal{H}_1}| &\leq 4 \|\sigma_{s_0}\|_{L^\infty(\mathbb{R})}\|  v\|_{\mathcal{H}^1_1}\|w\|_{\mathcal{H}^1_1},\\
        |e^{s_0}(\p_z v, \p_z w)_{\mathcal{H}_1}| &\leq e^{s_0}\|  v\|_{\mathcal{H}^1_1}\| w\|_{\mathcal{H}^1_1}, \\
        |( v, w)_{\mathcal{H}_1}| &\leq \|  v\|_{\mathcal{H}^1_1}\| w\|_{\mathcal{H}^1_1}.
    \end{align*}
     Summing up, we have shown that
        \begin{equation}
        \label{boundenessLions}
         |a_s(v,w)| \leq C \|v\|_{\mathcal{H}^1_1}\| w\|_{\mathcal{H}^1_1},
        \end{equation}
        for every $ s \in [0,s_0]$ and for all $ v,w \in \mathcal{H}^1_1, $ where $C$  is a constant depending on $s_0, \|f'\|_{L^\infty(\Omega)}$ and $E_1.$
        Finally, we have to show that
        \begin{equation}
        \label{coercivityLions}
            \mathfrak{R}\{a_s[v]\} \geq c_1\|v\|^2_{\mathcal{H}_1^1} - c_2\|v\|^2_{\mathcal{H}_1}
        \end{equation}
        for every $v \in \mathcal{H}^1_1$ and a.e. $s \in [0,s_0],$ where $a_s[v]:= a_s(v,v).$ We have
        \begin{equation}
        \label{realForm_as}
           \mathfrak{R}\{a_s[v]\} =  \|\p_y v-\sigma_s\p_z v\|^2_{\mathcal{H}_1} + e^s\|\p_z v\|^2_{\mathcal{H}_1} -E_1e^s\|v\|^2_{\mathcal{H}_1} -\frac{1}{4}\|v\|^2_{\mathcal{H}_1} -\frac{1}{2}\mathfrak{R}( yv,\sigma_s\p_z w)_{\mathcal{H}_1}.
        \end{equation}
        For $v \in C^\infty_0(\Omega)$ an integration by parts shows that
        \begin{align}
        \label{nonsymmetricpartZero}
            \mathfrak{R}( yv,\sigma_s\p_z v)_{\mathcal{H}_1} = 0
        \end{align}
        and by density this result can be extended to all $ v \in \mathcal{H}^1_1.$ As a next step, we would like to estimate the term $ \|\p_y v-\sigma_s\p_z v\|^2_{\mathcal{H}_1}$ using the trivial inequality $(a-b)^2 \geq \epsilon a^2 - \frac{\epsilon}{1-\epsilon}b^2:$
        \begin{align*}
            \|\p_y v-\sigma_s\p_z v\|^2_{\mathcal{H}_1} &\geq \epsilon\|\p_y v\|^2_{\mathcal{H}_1} - \frac{\epsilon}{1-\epsilon}\|\sigma_s\p_z v\|^2_{\mathcal{H}_1}  \\
            &\geq \epsilon\|\p_y v\|^2_{\mathcal{H}_1} - \frac{\epsilon}{1-\epsilon}e^s\|f'\|_{L^\infty(\mathbb{R})}\|\p_z v\|^2_{\mathcal{H}_1}.
        \end{align*}
        In the next step we use the Poincar\'e inequality on $(0,d):$
        \begin{equation}
            \label{poincare}
            \|\nabla g\|^2_{L^2((0,d))} \geq E_1 \| g\|^2_{L^2((0,d))}, \quad \forall g \in H_0^1((0,d)),
        \end{equation}
        and the Fubini's theorem
        \begin{align*}
         \|\p_y v-\sigma_s\p_z v\|^2_{\mathcal{H}_1} +(1-\epsilon)e^s\|\p_z v\|^2_{\mathcal{H}_1} &\geq  \epsilon\|\p_y v\|^2_{\mathcal{H}_1} +e^s\left( (1-\epsilon)- \frac{\epsilon}{1-\epsilon}\|f'\|_{L^\infty(\mathbb{R})}\right)\|\p_z v\|^2_{\mathcal{H}_1} \\
         &\geq \epsilon\|\p_y v\|^2_{\mathcal{H}_1} + E_1e^s\left(1-\epsilon - \frac{\epsilon}{1-\epsilon}\|f'\|_{L^\infty(\mathbb{R})}\right)\| v\|^2_{\mathcal{H}_1},
        \end{align*}
        where the last inequality holds for sufficiently small $\epsilon$ such that 
        \[(1-\epsilon)- \frac{\epsilon}{1-\epsilon}\|f'\|_{L^\infty(\mathbb{R})} > 0.\]
        Next, using this inequality, the identity \eqref{nonsymmetricpartZero} and the trivial bound $1\leq e^s \leq e^{s_0}, \forall s \in [0,s_0],$ we get the estimate for \eqref{realForm_as},  $\forall v \in \mathcal{H}^1_1$ and sufficiently small $\epsilon:$
        \begin{equation*}
            \mathfrak{R}\{a_s[v]\} \geq \epsilon\|\p_y v\|^2_{\mathcal{H}^1_1} - \left[E_1e^{s_0}\left(\epsilon + \frac{\epsilon}{1-\epsilon}\|f'\|_{L^\infty(\mathbb{R})}\right)+\frac{1}{4}+\epsilon\right]\| v\|^2_{\mathcal{H}_1},
        \end{equation*}
          where the constant 
          \[c_1:= \epsilon\] depends on $\|f'\|_{L^\infty(\mathbb{R})}$ and the constant 
          \[c_2 := \left[E_1e^{s_0}\left(\epsilon + \frac{\epsilon}{1-\epsilon}\|f'\|_{L^\infty(\mathbb{R})}\right)+\frac{1}{4}+\epsilon\right] \] depends on $s_0, \|f'\|_{L^\infty(\mathbb{R})}$ and $E_1.$
          
Using Theorem \ref{theoremLions} we conclude that the unique solution $w$ of \eqref{weakformulationH1} satisfies 
  \begin{align*}
       &w \in L^2((0, s_0);\mathcal{H}^1_1)\cap C^0([0,s_0];\mathcal{H}_1), \\
       &\dot{w} \in L^2((0, s_0);\mathcal{H}^{-1}_1).
  \end{align*}
  Using the fact that $s_0$ is an arbitrary positive number, we obtain the global continuous solution
  \[w \in C^0([0,\infty);\mathcal{H}_1).\]
This concludes the proof of Proposition~\ref{propositionUniq}.
\end{proof}
\begin{remark}
\label{PropositionClosed_as}
Using the estimates \eqref{boundenessLions} and \eqref{coercivityLions} we get that the sesquilinear form $a_s$ is closed on its domain $\mathcal{H}^1_1.$
\end{remark}
Next, we can prove a partial equivalence of evolutions \eqref{weakformulationSeflsimilar} and \eqref{weakformulationH1}.
\begin{proposition}
Let $w_0\in \mathcal{H}_1$ and let $w$ be the unique solution of \eqref{weakformulationH1}, $\forall v \in \mathcal{H}^1_1$ and a.e. $s \in [0, \infty),$ subject to the initial condition $w(0) = w_0,$ that is specified in Proposition \ref{propositionUniq}. Then $w$ is also the unique solution of \eqref{evolutionSelfsimilar}, $\forall \tilde{v} \in \mathcal{H}^1_0$ and a.e. $s \in [0, \infty),$ subject to the same initial condition.
\end{proposition}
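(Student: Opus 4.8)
The plan is to transfer the solution $w$ constructed in Proposition~\ref{propositionUniq} back to the unweighted space by undoing the weight $K(y)=e^{y^2/4}$, and to show that the resulting function solves the weak formulation \eqref{weakformulationSeflsimilar}; uniqueness there follows from the well-posedness already established in $L^2(\Omega)$ via the unitary transformation $U$. Concretely, given a test function $\tilde v\in \mathcal{H}^1_0=H^1_0(\Omega)$ (with enough decay, first $\tilde v\in C^\infty_0(\Omega)$ and then by density), set $v:=K^{-1}\tilde v=e^{-y^2/4}\tilde v$. One checks that $v\in\mathcal{H}^1_1$: indeed $\|v\|_{\mathcal{H}_1}=\|\tilde v\|_{\mathcal{H}_0}$ and $\p_y v = e^{-y^2/4}(\p_y\tilde v-\tfrac12 y\tilde v)$, so $\|\p_y v\|_{\mathcal{H}_1}\le \|\p_y\tilde v\|_{\mathcal{H}_0}+\tfrac12\|y\tilde v\|_{\mathcal{H}_0}$, which is finite for $\tilde v\in C^\infty_0(\Omega)$; the density of such $\tilde v$ in $\mathcal{H}^1_0$ lets us reduce to this case. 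Plugging this $v$ into equation \eqref{weakformulationH1} and retracing the formal computation \eqref{formalComputing} (read backwards) should reproduce exactly the terms of \eqref{weakformulationSeflsimilar} with test function $\tilde v$, i.e. the pairing term $\langle \tilde v,\dot w(s)-\tfrac12 y\p_y w(s)\rangle$ together with $\mathcal{Q}_s(\tilde v,w(s))-E_1e^s(\tilde v,w(s))_{L^2(\Omega)}$.

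The key steps, in order, would be: (i) fix $\tilde v\in C^\infty_0(\Omega)$ and define $v=K^{-1}\tilde v$, verifying $v\in\mathcal{H}^1_1$ as above; (ii) substitute $v$ into $\langle v,\dot w(s)\rangle + a_s(v,w(s))=0$ and expand each inner product in $\mathcal{H}_1$ using $(\,\cdot\,,\,\cdot\,)_{\mathcal{H}_1}=(K\,\cdot\,,\,\cdot\,)_{\mathcal{H}_0}$; (iii) move the weight $K$ onto $w$, recognizing that $K\p_y v = \p_y\tilde v-\tfrac12 y\tilde v + \tfrac12 y\tilde v = \ldots$ — more precisely, carry out the reverse of the integration by parts in \eqref{formalComputing} so that the $\tfrac12 y v$ contributions recombine with the derivative terms to yield the clean form $\mathcal{Q}_s(\tilde v,w(s))$ plus the extra first-order term $-\tfrac12\langle\tilde v,y\p_y w(s)\rangle$ and the $-\tfrac14$ and $-E_1e^s$ terms; (iv) conclude that $w$ satisfies \eqref{weakformulationSeflsimilar} for all $\tilde v\in C^\infty_0(\Omega)$, then extend to all $\tilde v\in\mathcal{H}^1_0$ by density together with the continuity bounds \eqref{estimateYV} and \eqref{boundenessLions}; (v) invoke that $w\in C^0([0,\infty);\mathcal{H}_1)\subset C^0([0,\infty);L^2(\Omega))$ with $w(0)=w_0$, so it is \emph{a} solution of the transformed original problem, and by the uniqueness of solutions of \eqref{weakformulationSeflsimilar} — which holds since \eqref{weakformulationSeflsimilar} is just $U$ applied to the well-posed problem \eqref{weakformulationOriginal}, or alternatively by a direct energy estimate — it is \emph{the} unique one.

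The main obstacle I expect is the bookkeeping in step (iii): one must be careful that the regularity of $w$ is only $w(s)\in\mathcal{H}^1_1$ and $\dot w(s)\in\mathcal{H}^{-1}_1$, so the manipulations moving $K$ across derivatives have to be justified at the level of the weak (dual) pairing rather than pointwise, and the term $\langle \tilde v, y\p_y w(s)\rangle$ must be interpreted correctly as an element of the duality between $\mathcal{H}^1_1$ (after multiplication by $K^{-1}$) and $\mathcal{H}^{-1}_1$ — equivalently one checks $y\tilde v$-type factors land in the right space using \eqref{estimateYV}. A secondary point requiring care is that the test spaces differ ($\mathcal{H}^1_1$ versus $\mathcal{H}^1_0$) and the map $\tilde v\mapsto K^{-1}\tilde v$ does \emph{not} send all of $\mathcal{H}^1_0$ into $\mathcal{H}^1_1$ in an obviously bounded way, so the density reduction to $C^\infty_0(\Omega)$ must be done before taking limits; this is precisely why the proposition only claims a \emph{partial} equivalence (solutions of \eqref{weakformulationH1} solve \eqref{weakformulationSeflsimilar}, not conversely for every admissible test function). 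Once these technical points are handled, the computation is a direct reversal of \eqref{formalComputing} and the conclusion is immediate.
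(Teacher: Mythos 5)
Your proposal is correct and follows essentially the same route as the paper: substitute the test function $v=K^{-1}\tilde v$ with $\tilde v\in C^\infty_0(\Omega)$ into \eqref{weakformulationH1}, reverse the formal computation \eqref{formalComputing}, and extend to all $\tilde v\in\mathcal{H}^1_0$ by density, with uniqueness inherited from the well-posedness of the untransformed problem. (Only a cosmetic slip: $\|v\|_{\mathcal{H}_1}=\|\tilde v\|_{\mathcal{H}_{-1}}\le\|\tilde v\|_{\mathcal{H}_0}$, not an equality with the $\mathcal{H}_0$-norm; this does not affect the argument.)
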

\begin{proof}
We choose a test function $v(y,z):= K^{-1}(y)\tilde{v}(y,z) $ in \eqref{weakformulationH1}, where $\tilde{v} \in C^\infty_0(\Omega)$ is an arbitrary function. Recalling the formal computation in \eqref{formalComputing} it is easy to see that $w$ satisfies also the equation \eqref{weakformulationSeflsimilar}, $\forall \tilde{v} \in C^\infty_0(\Omega)$ and a.e. $s \in [0,\infty).$ By the density argument this result can be extended to all $ \tilde{v} \in \mathcal{H}^1_0.$
\end{proof}
\subsection{Reduction to a spectral problem}
As a consequence of the previous subsection, reducing the space of initial data, we can focus on the asymptotic time behaviour of the solutions of \eqref{weakformulationH1}. By choosing $v:=w(s)$ in \eqref{weakformulationH1} and combining the equation with its conjugate version we get
\begin{equation}
\label{IdentityJ1}
    \frac{1}{2}\frac{\d}{\d s}\|w(s)\|^2_{\mathcal{H}_1} = -J^{(1)}_s[w(s)],
\end{equation}
where $J^{(1)}_s[w(s)] = \mathfrak{R}\{a_s[w]\}$ and $w \in \mathcal{D}(J^{(1)}):= \mathcal{D}(a_s) = \mathcal{H}^1_1.$ Recalling \eqref{realForm_as} and \eqref{nonsymmetricpartZero} we obtain
\begin{equation}
\label{formJ1}
 J^{(1)}_s[w]  = \|\p_y w-\sigma_s\p_z w\|^2_{\mathcal{H}_1} + e^s\|\p_z w\|^2_{\mathcal{H}_1} -E_1e^s\|w\|^2_{\mathcal{H}_1} -\frac{1}{4}\|w\|^2_{\mathcal{H}_1}.
\end{equation}
Similarly as in Remark \ref{PropositionClosed_as}, using the estimates \eqref{boundenessLions} and \eqref{coercivityLions} we get that the form $J^{(1)}_s$ is closed on its domain $\mathcal{H}^1_1.$
As a next step, we would like to analyze the coercivity of this form. We use the spectral bound valid for each fixed $ s \in [0,\infty):$
\begin{equation}
\label{spectralboundJ1}
J^{(1)}_s[w] \geq \mu (s)\|w\|^2_{\mathcal{H}_1},   \quad \forall w \in \mathcal{H}^1_1,
\end{equation}
where $\mu(s)$ is the lowest point of the spectrum of the operator $T^{(1)}_s$ in $\mathcal{H}_1$ associated with $J^{(1)}_s$ via the representation theorem (cf. \cite[Chapter 6, Theorem 2.1]{Kato}).
\begin{proposition}
The operator $T^{(1)}_s$ is self-adjoint.
\end{proposition}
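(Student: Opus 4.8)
The plan is to realise $T^{(1)}_s$ as the self-adjoint operator produced by the first representation theorem \cite[Chapter~6, Theorem~2.1]{Kato}, exactly as the statement anticipates, and to reduce everything to a single observation: that theorem outputs a \emph{self-adjoint} operator, bounded below, precisely when the generating sesquilinear form is symmetric, densely defined, closed and semibounded. Three of these four items are already available. The form $J^{(1)}_s$ has domain $\mathcal{H}^1_1$, which contains $C_0^\infty(\Omega)$ and is therefore dense in $\mathcal{H}_1$; it is closed on $\mathcal{H}^1_1$, as recorded immediately above from \eqref{boundenessLions} and \eqref{coercivityLions}; and it is bounded below, again by \eqref{coercivityLions}, equivalently by \eqref{spectralboundJ1}. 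So the only point left to check is that the sesquilinear form obtained by polarising the quadratic form $w\mapsto J^{(1)}_s[w]$ of \eqref{formJ1} is Hermitian.

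For the symmetry I would argue as follows. By construction $J^{(1)}_s[w]=\mathfrak{R}\{a_s[w]\}$ is real for every $w\in\mathcal{H}^1_1$, and a quadratic form that is real-valued on a complex Hilbert space polarises to a Hermitian sesquilinear form; hence $J^{(1)}_s(\cdot,\cdot)$ is symmetric. If one prefers a direct computation, the polarisation of $J^{(1)}_s[\cdot]$ is the symmetric part of $a_s$, and the only summand of $a_s$ that is not already manifestly Hermitian is the drift term $-\tfrac12(yv,\sigma_s\p_z w)_{\mathcal{H}_1}$; an integration by parts in $z$ (using that $\sigma_s$ and the weight $K$ depend on $y$ only and that elements of $\mathcal{H}^1_1$ vanish on $\p\Omega$, first on $C_0^\infty(\Omega)$ and then by density) gives $\overline{(yw,\sigma_s\p_z v)_{\mathcal{H}_1}}=-(yv,\sigma_s\p_z w)_{\mathcal{H}_1}$, so the symmetric part of that term vanishes and one is left with
\begin{equation*}
    J^{(1)}_s(v,w) = (\p_y v-\sigma_s\p_z v,\,\p_y w-\sigma_s\p_z w)_{\mathcal{H}_1} + e^s(\p_z v,\p_z w)_{\mathcal{H}_1} - E_1 e^s(v,w)_{\mathcal{H}_1} - \tfrac14(v,w)_{\mathcal{H}_1},
\end{equation*}
which is patently invariant under interchanging $v$ and $w$ and conjugating; this is, of course, consistent with the diagonal identity \eqref{nonsymmetricpartZero}.

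With symmetry in place, \cite[Chapter~6, Theorem~2.1]{Kato} applies and yields a unique self-adjoint operator $T^{(1)}_s$ in $\mathcal{H}_1$, bounded below, with $\mathcal{D}(T^{(1)}_s)\subset\mathcal{H}^1_1$ dense in $\mathcal{H}_1$ and $(v,T^{(1)}_s w)_{\mathcal{H}_1}=J^{(1)}_s(v,w)$ for all $w\in\mathcal{D}(T^{(1)}_s)$ and $v\in\mathcal{H}^1_1$; in particular $\mu(s)=\inf\sigma(T^{(1)}_s)$ and \eqref{spectralboundJ1} follows. I expect the symmetry check to be the only genuinely non-automatic step, and within it the single point requiring care is that $a_s$ is \emph{not} symmetric, so one cannot quote symmetry of $a_s$ directly — one must use that $J^{(1)}_s$ is its real part, equivalently that the antisymmetric drift term drops out of the polarisation. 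Everything else is a textbook invocation of the representation theorem.
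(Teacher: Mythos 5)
Your proposal is correct and follows essentially the same route as the paper: show that $J^{(1)}_s$ is densely defined, closed and bounded below, then invoke Kato's representation theorem for semibounded symmetric forms. The only difference is that you explicitly verify the symmetry of the polarised form (checking that the anti-Hermitian drift term $-\tfrac12(yv,\sigma_s\p_z w)_{\mathcal{H}_1}$ drops out), a point the paper leaves implicit.
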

\begin{proof}
We know that the form $J^{(1)}_s$ is closed and densely defined, due to the density of $\mathcal{H}^1_1$ in $\mathcal{H}_1.$ Using the estimate \eqref{coercivityLions} we see that the form is also bounded from below. Therefore, the remark follows from the representation theorem (cf. \cite[Chapter 6, Theorem 2.6.]{Kato}). 
\end{proof}
We take the identity \eqref{IdentityJ1},  replace $-J^{(1)}_s[w(s)]$ by the spectral bound \eqref{spectralboundJ1} and integrate:
\begin{equation}
\label{energyEstimate}
   \|w(s)\|^2_{\mathcal{H}_1} \leq \|w_0\|^2_{\mathcal{H}_1}e^{-\int^s_0\mu(\tau) \d \tau}. 
\end{equation}
We see that we reduced the problem of asymptotic time behaviour of \eqref{evolutionSelfsimilar} to a spectral analysis of the family of the operators $\{T^{(1)}_s\}_{s \geq 0}.$

Next, we map the operator $T^{(1)}_s$ in $\mathcal{H}_1$ into the unitary equivalent operator $T^{(0)}_s$ in $\mathcal{H}_0$ via the unitary transformation $\mathcal{U}_0:\mathcal{H}_1 \to \mathcal{H}_0$ defined by:
\begin{equation}
\label{trafoU0}
    (\mathcal{U}_0w)(y,z):= K^{1/2}(y)w(y,z) = e^{y^2/8}w(y,z).
\end{equation}
We define $T^{(0)}_s:=\mathcal{U}_0 T^{(1)}_s\mathcal{U}_0^{-1}, $ which is the self-adjoint operator associated with the quadratic form $J^{(0)}_s[v]:=J^{(1)}_s[\mathcal{U}_0^{-1}v], $ where $v \in \mathcal{D}(J^{(0)}_s):= \mathcal{U}_0\mathcal{D}(J^{(1)}_s).$ A straightforward calculation yields
\begin{equation}
\label{formJ0}
  J^{(0)}_s[v]= \|\p_yv-\sigma_s\p_z v\|^2_{\mathcal{H}_0}  +\frac{1}{16}\|y v\|^2_{\mathcal{H}_0}  +e^s\|\p_z v\|^2_{\mathcal{H}_0}  -E_1e^s\| v\|^2_{\mathcal{H}_0}  
\end{equation}
for all $v \in \mathcal{D}(J^{(0)}_s).$
Moreover, it is easy to verify that the domain of the form $J^{(1)}_s$ is in fact the closure of $C^\infty_0(\Omega)$ with respect to the norm $\|\cdot\|_{J^{(0)}}:= \left(\|\cdot\|^2_{\mathcal{H}_0}+\|\nabla \cdot\|^2_{\mathcal{H}_0}+\|y\cdot\|^2_{\mathcal{H}_0}\right)^{1/2}.$
We see that the domain $\mathcal{D}(J^{(1)}_s)$ is independent of $s.$ Finally, we show that $\mu (s)$ is the lowest eigenvalue of $T_s^{(1)}:$
\begin{proposition}
The operators $T_s^{(1)} \simeq T_s^{(0)}$ have purely discrete spectrum for every $s \in [0, \infty).$
\end{proposition}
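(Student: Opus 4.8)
The plan is to establish discreteness of the spectrum of $T^{(0)}_s$ (hence of the unitarily equivalent $T^{(1)}_s$) by showing that the form domain $\mathcal{D}(J^{(0)}_s)$ embeds compactly into $\mathcal{H}_0 = L^2(\Omega)$. Since $T^{(0)}_s$ is self-adjoint and bounded below, the standard criterion (cf.\ \cite[Theorem XIII.64]{ReedSimon}, or the analogous statement via \cite{Kato}) reduces the claim to verifying that the unit ball of the form domain, equipped with the norm $\|\cdot\|_{J^{(0)}} = (\|\cdot\|_{\mathcal{H}_0}^2 + \|\nabla \cdot\|_{\mathcal{H}_0}^2 + \|y\cdot\|_{\mathcal{H}_0}^2)^{1/2}$, is precompact in $\mathcal{H}_0$.

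First I would observe that from the explicit expression \eqref{formJ0} together with the coercivity estimate \eqref{coercivityLions} transported through $\mathcal{U}_0$, any $v$ in the form domain satisfies $\|\p_y v\|^2_{\mathcal{H}_0} + \|\p_z v\|^2_{\mathcal{H}_0} + \|yv\|^2_{\mathcal{H}_0} \lesssim J^{(0)}_s[v] + \|v\|^2_{\mathcal{H}_0}$, with constants depending on $s$ but fixed for the fixed $s$ under consideration. So it suffices to show the embedding
\[
W := \{ v \in H^1_0(\Omega) : yv \in L^2(\Omega) \} \hookrightarrow L^2(\Omega)
\]
is compact, where $W$ carries the norm $\|v\|_W = (\|v\|^2_{L^2} + \|\nabla v\|^2_{L^2} + \|yv\|^2_{L^2})^{1/2}$. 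The cross-section direction $z \in (0,d)$ is bounded, so compactness there is the usual Rellich embedding $H^1_0((0,d)) \hookrightarrow L^2((0,d))$; the unbounded direction $y \in \mathbb{R}$ is controlled by the confining weight $\|yv\|_{L^2}$, which forces tightness at infinity. Concretely, I would take a bounded sequence $(v_n)$ in $W$; on each truncated cylinder $\Omega_R := (-R,R)\times(0,d)$ the Rellich–Kondrachov theorem gives a subsequence converging in $L^2(\Omega_R)$, and a diagonal argument over $R \to \infty$ produces a single subsequence converging in $L^2_{\mathrm{loc}}$. The tail estimate $\int_{|y|>R} |v_n|^2 \le R^{-2} \int_{|y|>R} |yv_n|^2 \le R^{-2}\|v_n\|_W^2 \to 0$ uniformly in $n$ then upgrades this to convergence in $L^2(\Omega)$. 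This establishes compactness of the embedding, hence compact resolvent for $T^{(0)}_s$, hence purely discrete spectrum; unitary equivalence transfers the conclusion to $T^{(1)}_s$.

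I expect the main (though still routine) obstacle to be the bookkeeping showing that the $s$-dependent, non-symmetric lower-order term $\sigma_s \p_z v$ does not spoil the equivalence between $\|v\|_{J^{(0)}}$ and the graph-type norm governing the compact embedding: one must invoke the coercivity bound \eqref{coercivityLions}, which already absorbs the $\|f'\|_{L^\infty}$-weighted cross term via the elementary inequality $(a-b)^2 \ge \epsilon a^2 - \tfrac{\epsilon}{1-\epsilon}b^2$ used in the proof of Proposition \ref{propositionUniq}, combined with the Poincaré inequality \eqref{poincare} in the $z$-variable to keep $\epsilon\|\p_y v\|^2$ plus a genuinely positive multiple of $\|\p_z v\|^2$ on the right. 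Once the form domain is pinned down as a fixed (s-independent) space continuously embedded in $W$, the compactness argument above is insensitive to $s$, so the discreteness holds uniformly for every $s \in [0,\infty)$, as claimed.
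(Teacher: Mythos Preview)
Your proof is correct and follows the same overall scheme as the paper: identify the form domain of $J^{(0)}_s$ with the $s$-independent space $W=\{v\in H^1_0(\Omega): yv\in L^2(\Omega)\}$ via the coercivity bound, and then show that $W\hookrightarrow L^2(\Omega)$ is compact, yielding compact resolvent. The only genuine difference is in how the compact embedding is established. The paper introduces the comparison operator $L$ with form $\mathcal{Q}^L[v]=\|\p_y v\|^2_{\mathcal{H}_0}+\|\p_z v\|^2_{\mathcal{H}_0}+\tfrac{1}{16}\|yv\|^2_{\mathcal{H}_0}$, computes its spectrum explicitly by separation of variables as $\sigma(L)=\{D_m+E_n\}$ using the known spectrum \eqref{spectrumHarmonicOscilator} of the one-dimensional harmonic oscillator, and reads off compactness of the form-domain embedding from the discreteness of $\sigma(L)$; the bounded embedding $\iota_2$ from $\mathcal{D}(J^{(0)}_s)$ into $\mathcal{D}(\mathcal{Q}^L)$ then finishes the job exactly as in your final paragraph. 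You instead prove compactness of $W\hookrightarrow L^2(\Omega)$ directly, using Rellich--Kondrachov on the truncated rectangles $\Omega_R$ together with the tightness estimate $\int_{|y|>R}|v_n|^2\le R^{-2}\|v_n\|_W^2$. Your route is slightly more elementary in that it avoids invoking the explicit spectrum of the harmonic oscillator; the paper's route is a bit quicker once that spectrum is taken as known, and has the incidental benefit of recording $\sigma(l)$, which is reused later in Corollary~\ref{Corollary3/4}.
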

\begin{proof}
First, we define the operator $L$ and the corresponding quadratic form 
\begin{equation*}
    \mathcal{Q}^L[v]:= \|\p_y v \|^2_{\mathcal{H}_0} + \|\p_z v \|^2_{\mathcal{H}_0} +\frac{1}{16}\|y v\|^2_{\mathcal{H}_0}
\end{equation*}
for all $v \in \mathcal{D}(\mathcal{Q}^L) := \mathcal{D}(\mathcal{J}_s^{(0)}).$ Using the fact that the harmonic-oscillator Hamiltonian 
\begin{equation}
\label{harmonicOscillator1D}
    l:= -\frac{\d ^2}{\d y^2}+\frac{1}{16}y^2
\end{equation}
in $L^2(\mathbb{R})$ (which means the Friedrichs extension of this operator defined on $C^\infty_0(\mathbb{R})$) has purely discrete spectrum: 
\begin{equation}
\label{spectrumHarmonicOscilator}
    \sigma(l) =  \bigg\{D_m:=\frac{1}{2}\left(m-\frac{1}{2}\right) \: | \: m \in \mathbb{N}\bigg\},
\end{equation}
 (cf. for example \cite[Chapter 1, Section 2.3]{HarmonicsOscilator}), and the knowledge of the spectrum of $-\Delta_D^{(0,d)}:$
 \begin{equation}
 \label{spectrum0d}
      \sigma(-\Delta_D^{(0,d)}) =  \bigg\{E_n:=\left(\frac{n \pi}{d}\right)^2 \: | \: n \in \mathbb{N}\bigg\},
 \end{equation}
 we get
\begin{equation*}
    \sigma(L) = \sigma(l) + \sigma(-\Delta_D^{(0,d)}) =  \sigma_{\text{disc}}(l) + \sigma_{\text{disc}}(-\Delta_D^{(0,d)}) = \{D_m + E_n \: | \: n,m \in \mathbb{N}\}.
\end{equation*}
Using the minimax principle we have $\sigma(L) = \sigma_{\text{disc}}(L).$
The discreteness of the spectrum implies that the operator $L$ has compact resolvent and also that the embedding $\iota_1$
\begin{equation*}
    \left(\mathcal{D}(\mathcal{Q}^L), \left(\mathcal{Q}^L[\cdot] + \|\cdot\|_{\mathcal{H}_0}^2\right)^{1/2}\right) \overset{\iota_1}{\hookrightarrow} (\mathcal{H}_0, \|\cdot\|_{\mathcal{H}_0})
\end{equation*}
is compact. As a next step, we show that the embedding $\iota_2$
\begin{equation*}
     \left(\mathcal{D}(J_s^{(0)}), \left(J_s^{(0)}[\cdot] + \|\cdot\|_{\mathcal{H}_0}^2\right)^{1/2}\right) \overset{\iota_2}{\hookrightarrow} \left(\mathcal{D}(\mathcal{Q}^L), \left(\mathcal{Q}^L[\cdot] + \|\cdot\|_{\mathcal{H}_0}^2\right)^{1/2}\right)
\end{equation*}
is bounded. By repeating the same procedure as in the proof of the estimate \eqref{coercivityLions} for $\mathcal{H}_0$ and $\mathcal{H}^1_0$ instead of $\mathcal{H}_1$ and $\mathcal{H}^1_1,$
we get 
\begin{equation*}
     \|\p_y v-\sigma_s\p_z v\|^2_{\mathcal{H}_0} + e^s\|\p_z v\|^2_{\mathcal{H}_0} -E_1e^s\|v\|^2_{\mathcal{H}_0} -\frac{1}{4}\|v\|^2_{\mathcal{H}_0} \geq c_1\|v\|^2_{\mathcal{H}_0^1} - c_2\|v\|^2_{\mathcal{H}_0}
\end{equation*}
for all $v \in \mathcal{H}^1_0$ with $c_1 = \epsilon$ and
\begin{equation*}
    c_2 := \left[E_1e^{s_0}\left(\epsilon + \frac{\epsilon}{1-\epsilon}\|f'\|_{L^\infty(\mathbb{R})}\right)+\frac{1}{4}+\epsilon\right].
\end{equation*}
Therefore, we have 
\begin{align*}
     &\|v\|^2_{\mathcal{H}_0^1} + \frac{1}{16}\|y v\|^2_{\mathcal{H}_0} \leq  C \left[ \|\p_y v-\sigma_s\p_z v\|^2_{\mathcal{H}_0} + e^s\|\p_z v\|^2_{\mathcal{H}_0} -E_1e^s\|v\|^2_{\mathcal{H}_0} +\|v\|^2_{\mathcal{H}_0} + \frac{1}{16}\|y v\|^2_{\mathcal{H}_0}\right],
\end{align*}
where $C = \max \{1/c_1, (c_2-1/4)/c_1, 1\}$ and thus $\iota_2$ is bounded. By composing the bounded embedding $\iota_2$ and the compact embedding $\iota_1$ we obtain $\iota := \iota_2 \circ \iota_1:$
\begin{equation*}
    \left(\mathcal{D}(J_s^{(0)}), \left(J_s^{(0)}[\cdot] + \|\cdot\|_{\mathcal{H}_0}^2\right)^{1/2}\right) \overset{\iota}{\hookrightarrow} (\mathcal{H}_0, \|\cdot\|_{\mathcal{H}_0}),
\end{equation*}
which is therefore also compact. Thus the operator $T_s^{(0)}$ has compact resolvent and also purely discrete spectrum  for every $s \in [0, \infty).$
\end{proof}
\subsection{Asymptotic behaviour of the spectrum}
Next, we need the information about the limit of the eigenvalue $\mu (s)$ as the time $s$ tends to infinity. Using the fact that the function $\sigma_s$ converges in the distributional sense to a multiple of the delta function with support at zero as $s \to \infty,$ as in the paper \cite{TwistedTubesZuazua}, due to the form of  $J^{(0)}_s$ we expect that the corresponding operator $ T_s^{(0)}$ will converge, in a suitable sense, to the one-dimensional operator $l$ defined in \eqref{harmonicOscillator1D} plus an extra Dirichlet boundary condition in zero. More precisely, the limiting operator $l_D$ is defined as the self-adjoint operator in $L^2(\mathbb{R}),$ where the corresponding quadratic form $\mathcal{Q}^l_D$ acts in the same way as the corresponding form of $l,$ however, it has smaller domain
\begin{equation*}
    \mathcal{D}(\mathcal{Q}^l_D) := \{\psi \in \mathcal{D}(\mathcal{Q}^l) \:| \:\psi (0) = 0\}.
\end{equation*}
\begin{remark}
\label{RemarkCoreld}
Alternatively, the form domain $\mathcal{D}(\mathcal{Q}^l_D)$ is the closure of $C_0^\infty(\mathbb{R}\setminus \{0\})$ with respect to the norm $\left(\|\cdot\|^2_{L^2(\mathbb{R})}+\|\nabla\cdot\|^2_{L^2(\mathbb{R})}+\|y\cdot\|^2_{L^2(\mathbb{R})}\right)^{1/2}.$
\end{remark}
Due to the fact that the operators $T^{(0)}_s$ and $l_D$ act in different spaces, we decompose the Hilbert space $\mathcal{H}_0$ into the orthogonal sum
\[\mathcal{H}_0 = \mathfrak{h}_1\oplus \mathfrak{h}_1^{\perp}, \]
where the subspace $\mathfrak{h}_1$ consists of the functions of the form $\psi_1(y,z) = \varphi(y)\mathcal{J}_1(z),$
where $\mathcal{J}_1$ denotes the positive eigenfunction of $-\Delta^{(0,d)}_D$ corresponding to the eigenvalue $E_1$ such that $\|\mathcal{J}_1\|_{L^2(\mathbb{R})} = 1.$
For all $ \psi \in \mathcal{H}_0$ we have the decomposition
\begin{equation}
\label{decomposition}
    \psi = \psi_1 + \phi,
\end{equation}
where $\psi_1 \in \mathfrak{h}_1$ and $\phi \in \mathfrak{h}_1^{\perp}.$
Since the mapping $\pi: \varphi \mapsto \psi_1$ is an isomorphism of $L^2(\mathbb{R})$ onto $\mathfrak{h}_1,$ we may identify any operator $l$ on $L^2(\mathbb{R})$ with the operator $\pi l \pi^{-1}$ on $\mathfrak{h}_1 \subset \mathcal{H}_0.$ 
To show the uniform-resolvent convergence of the operator $T^{(0)}_s$ to $l_D$ we use the following lemma:
\begin{lemma}[\cite{MagneticCazacu}, Lemma A.1.]
\label{lemmaCazacu}
Let $\{T_s\}_{s \in \mathbb{R}}$ be a family of bounded operators on a Hilbert space $\mathcal{H}$ and let $T$ be a compact operator in $\mathcal{H}.$ Suppose that  $\forall \{s_n\}_{n \in \mathbb{N}} \subset \mathbb{R},\: \forall \{f_n\}_{n \in \mathbb{N}}\subset \mathcal{H}$ such that 
\begin{itemize}
    \item $s_n \xrightarrow[n \to \infty]{} \infty,$
    \item $f_n\xrightarrow[n \to \infty]{w}f$ in $\mathcal{H},$
    \item $\forall n \in \mathbb{N}, \:  \|f_n\|_{\mathcal{H}} = 1$
\end{itemize}
implies 
\begin{equation*}
    T_{s_n}f_n \xrightarrow[n\to \infty]{} Tf \text{ in } \mathcal{H}.
\end{equation*}
Then $\{T_s\}_{s \geq 0}$ converges to $T$ uniformly, i.e.
\begin{equation*}
    \lim_{s \to \infty} \|T_s-T\|_{\mathcal{H} \to \mathcal{H}} =0.
\end{equation*}
\end{lemma}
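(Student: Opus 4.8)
The plan is to argue by contradiction, exploiting the weak sequential compactness of the closed unit ball of the Hilbert space $\mathcal{H}$. Suppose the conclusion fails. Then $\|T_s - T\|_{\mathcal{H}\to\mathcal{H}}$ does not converge to $0$, so there are an $\varepsilon > 0$ and a sequence $s_n \to \infty$ with $\|T_{s_n} - T\|_{\mathcal{H}\to\mathcal{H}} \geq \varepsilon$ for all $n \in \mathbb{N}$. Since the operator norm is the supremum of $\|(T_{s_n}-T)g\|_{\mathcal{H}}$ over the unit sphere, for each $n$ we can pick $f_n \in \mathcal{H}$ with $\|f_n\|_{\mathcal{H}} = 1$ and $\|(T_{s_n}-T)f_n\|_{\mathcal{H}} \geq \varepsilon/2$.

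Next I would extract a subsequence: the sequence $(f_n)_{n}$ is bounded in $\mathcal{H}$, hence (after relabelling) $f_n \rightharpoonup f$ weakly for some $f \in \mathcal{H}$, while still $\|f_n\|_{\mathcal{H}} = 1$ for all $n$ and $s_n \to \infty$. Thus $(s_n)_n$ and $(f_n)_n$ satisfy exactly the three hypotheses of the lemma, so the assumed implication yields $T_{s_n} f_n \to Tf$ strongly in $\mathcal{H}$. On the other hand, compactness of $T$ turns the weakly convergent sequence $f_n \rightharpoonup f$ into the strongly convergent sequence $T f_n \to Tf$. Combining these,
\[
\|(T_{s_n}-T)f_n\|_{\mathcal{H}} \le \|T_{s_n}f_n - Tf\|_{\mathcal{H}} + \|Tf - Tf_n\|_{\mathcal{H}} \xrightarrow[n\to\infty]{} 0,
\]
contradicting the lower bound $\|(T_{s_n}-T)f_n\|_{\mathcal{H}} \geq \varepsilon/2$. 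Therefore $\|T_s - T\|_{\mathcal{H}\to\mathcal{H}} \to 0$ as $s \to \infty$.

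I do not expect a genuine obstacle here; the argument is essentially the familiar characterization that a bounded operator is sequentially continuous from the weak to the norm topology precisely when it is compact, applied to the $s$-dependent family. The only points to watch are that the normalization $\|f_n\|_{\mathcal{H}} = 1$ demanded by the hypothesis can indeed be arranged (it is built into the definition of the operator norm), and that passing to a subsequence preserves both $s_n \to \infty$ and the normalization, which is immediate.
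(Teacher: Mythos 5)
Your argument is correct and complete: the contradiction setup, the extraction of a weakly convergent subsequence of unit vectors, the application of the hypothesis to get $T_{s_n}f_n \to Tf$, the use of compactness of $T$ to get $Tf_n \to Tf$, and the triangle inequality all fit together without gaps. Note that the paper itself does not prove this lemma but imports it by citation from \cite{MagneticCazacu}, so there is no in-paper proof to compare against; your argument is the standard one that the cited source uses, and it is the natural (essentially only) way to prove the statement.
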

Next proposition enables us to use Lemma \ref{lemmaCazacu}. Due to the fact that we need to use the Hardy inequality $\eqref{RemarkHardy},$ we assume only the non-trivial shear. 
\begin{proposition}
\label{PropositionAsymptotics}
Let $f \in C^{0,1}(\mathbb{R}),$  $f'$ has compact support and $f'\neq 0.$ Then $\forall \{F_s\}_{s\geq 0} \subset \mathcal{H}_0$ such that $F_s \overset{w}{\longrightarrow}F \text{ in } \mathcal{H}_0$ and $\|F_s\|_{H_0} = 1 \: \forall s \geq 0,$ we have 
\[\lim_{s\mapsto \infty}\|(T^{(0)}_s+1)^{-1}F_s-[(l_D+1)^{-1}\oplus 0^{\perp}]F\|_{\mathcal{H}_0} = 0.\]
\end{proposition}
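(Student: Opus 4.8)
The plan is to verify the hypotheses of Lemma~\ref{lemmaCazacu} applied to the family $T_s := (T^{(0)}_s+1)^{-1}$ and the compact operator $T := (l_D+1)^{-1}\oplus 0^{\perp}$ on $\mathcal{H}_0$. Note that each $T_s$ is bounded (indeed a resolvent at a point of the resolvent set, since $T^{(0)}_s$ is bounded from below, uniformly on compact $s$-intervals, by the coercivity estimate), and $T$ is compact because $l_D$ has compact resolvent — this follows from the form comparison $\mathcal{Q}^l_D \geq \mathcal{Q}^l$ and the compact resolvent of $l$ noted in \eqref{spectrumHarmonicOscilator}. So it remains to take an arbitrary sequence $s_n \to \infty$ and $F_n \xrightarrow{w} F$ in $\mathcal{H}_0$ with $\|F_n\|_{\mathcal{H}_0}=1$, set $u_n := (T^{(0)}_{s_n}+1)^{-1}F_n$, and show $u_n \to Tf$ strongly in $\mathcal{H}_0$, where here $f$ should read $F$.

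The key steps: First, obtain uniform bounds on $u_n$. Since $(J^{(0)}_{s_n}[u_n]+\|u_n\|^2_{\mathcal{H}_0}) = (F_n,u_n)_{\mathcal{H}_0} \leq \|u_n\|_{\mathcal{H}_0} \leq \|(T^{(0)}_{s_n}+1)^{-1}\| \leq 1$ (using $T^{(0)}_{s_n}\geq E_1 e^{s_n}-\text{something}\geq 0$ for large $n$, or more crudely the uniform-in-$s$ lower bound $\geq -1/4$ which still gives a uniform resolvent bound on $\|(T^{(0)}_s+1)^{-1}\|\le 4/3$), I get that $J^{(0)}_{s_n}[u_n]$ is bounded and hence, reading off the individual terms of \eqref{formJ0}, the quantities $\|\partial_y u_n-\sigma_{s_n}\partial_z u_n\|_{\mathcal{H}_0}$, $\|yu_n\|_{\mathcal{H}_0}$, $e^{s_n/2}\|\partial_z u_n\|_{\mathcal{H}_0}$ are all bounded. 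The crucial point is the $z$-derivative: from $e^{s_n}\|\partial_z u_n\|^2_{\mathcal{H}_0}-E_1e^{s_n}\|u_n\|^2_{\mathcal{H}_0}$ being controlled together with the Poincaré inequality \eqref{poincare}, and then deducing that $e^{s_n}\bigl(\|\partial_z u_n\|^2_{\mathcal{H}_0}-E_1\|u_n\|^2_{\mathcal{H}_0}\bigr)$ is bounded; since $e^{s_n}\to\infty$, this forces the transverse ``excited'' part $\phi_n$ in the decomposition \eqref{decomposition} to satisfy $\|\phi_n\|_{\mathcal{H}_0}\to 0$ and in fact $\|\phi_n\|_{\mathcal{H}^1_0}\to 0$. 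Writing $u_n = \varphi_n \mathcal{J}_1 + \phi_n$, I then extract a subsequence along which $u_n\rightharpoonup u$ weakly in the form domain (bounded there) and, by the compact embedding $\iota$ from the previous proposition, $u_n\to u$ strongly in $\mathcal{H}_0$; so $u\in\mathfrak{h}_1$, $u=\varphi\,\mathcal{J}_1$.

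Next I identify the limit and show it solves the limiting resolvent equation. For fixed test function $v=\vartheta(y)\mathcal{J}_1(z)$ with $\vartheta\in C_0^\infty(\mathbb{R}\setminus\{0\})$, I pass to the limit in the weak equation $J^{(0)}_{s_n}(v,u_n)+(v,u_n)_{\mathcal{H}_0}=(v,F_n)_{\mathcal{H}_0}$. The terms $\frac{1}{16}(yv,yu_n)$, $(v,u_n)$ converge by strong convergence; $(v,F_n)\to(v,F)$ by weak convergence; $e^{s_n}(\partial_z v,\partial_z u_n)-E_1e^{s_n}(v,u_n)$ vanishes because $v\in\mathfrak{h}_1$ makes $(\partial_z v,\partial_z u_n)_{\mathcal{H}_0}=E_1(v,u_n)_{\mathcal{H}_0}$ exactly, cancelling; and $(\partial_y v-\sigma_{s_n}\partial_z v,\partial_y u_n-\sigma_{s_n}\partial_z u_n)_{\mathcal{H}_0}$ — this is where the Dirichlet condition at $0$ appears, since $\sigma_{s_n}\to c\,\delta_0$ and $\sigma_{s_n}^2\to$ (a large multiple of) $\delta_0$ distributionally, forcing the limit equation to live on the space with $\varphi(0)=0$; the cross terms and the $\sigma^2$ term, multiplied against the regular $\partial_z u_n$, must be handled via the boundedness of $\|\partial_y u_n - \sigma_{s_n}\partial_z u_n\|$ and the concentration of $\sigma_{s_n}$ (this uses $\mathrm{supp}\,\sigma_{s_n}$ shrinking to $\{0\}$ and a cut-off argument). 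The outcome is $\mathcal{Q}^l_D(\vartheta,\varphi)+(\vartheta,\varphi)_{L^2(\mathbb{R})}=(\vartheta,\pi^{-1}F)_{L^2(\mathbb{R})}$ for all such $\vartheta$, i.e.\ $u=(l_D+1)^{-1}(\pi^{-1}F\text{-component})\oplus 0 = TF$; uniqueness of this limit removes the need for subsequences. Finally, strong convergence of the full sequence $u_n\to TF$ combined with Lemma~\ref{lemmaCazacu} gives the uniform statement.

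The main obstacle is the singular-limit analysis of the mixed term $(\partial_y v-\sigma_{s_n}\partial_z v,\partial_y u_n-\sigma_{s_n}\partial_z u_n)_{\mathcal{H}_0}$: one must simultaneously show that the ``diagonal'' piece $\|\sigma_{s_n}\partial_z v\|$-type contributions concentrate to produce the Dirichlet constraint $\varphi(0)=0$ in the limit form domain (this is exactly where Theorem~\ref{TheoremHardy} — the Hardy inequality — is needed, to guarantee that the limiting operator picks up the Dirichlet condition rather than degenerating, and to get the quantitative control $\|\rho u_n\|$ that survives the limit), while the genuinely cross terms $-(\partial_y v,\sigma_{s_n}\partial_z u_n)-(\sigma_{s_n}\partial_z v,\partial_y u_n)$ must be shown to vanish or combine correctly. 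Handling this rigorously requires a careful localisation around $y=0$ using that $\sigma_{s_n}(y)=e^{s_n/2}f'(e^{s_n/2}y)$ has support in an interval of length $O(e^{-s_n/2})$ and $\int\sigma_{s_n}=\int f'$, together with the $\mathcal{H}^1_0$-control on $\phi_n$ and the weak convergence of $u_n$; this is the technical heart of the proposition and the place where the argument differs from the straight-strip case.
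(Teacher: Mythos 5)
Your architecture coincides with the paper's: test the resolvent identity with $\psi_s:=(T^{(0)}_s+1)^{-1}F_s$ itself to get uniform energy bounds, use the transverse spectral gap $E_2>E_1$ to show the $\mathfrak{h}_1^{\perp}$-component decays like $e^{-s}$, extract a strongly $\mathcal{H}_0$-convergent subsequence by compactness, and identify the limit by testing against $v=\vartheta\,\mathcal{J}_1$ with $\vartheta\in C_0^\infty(\mathbb{R}\setminus\{0\})$. However, there is a genuine gap at the crux: you never derive the Dirichlet condition $\varphi_\infty(0)=0$; you only assert that the Hardy inequality and the concentration of $\sigma_{s_n}$ ``force'' it. This cannot be read off the limit equation. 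Since your test functions are supported away from $y=0$, every $\sigma_{s_n}$-dependent term in the weak formulation vanishes \emph{identically} as soon as $e^{-s_n/2}\,\mathrm{supp}\,f'$ is disjoint from $\mathrm{supp}\,\vartheta$, so the limit equation on $C_0^\infty(\mathbb{R}\setminus\{0\})$ is satisfied by \emph{both} $(l+1)^{-1}\hat F$ and $(l_D+1)^{-1}\hat F$ and does not distinguish them. (For the same reason, the ``singular-limit analysis of the mixed term'' that you flag as the main obstacle is in fact trivial: no cut-off or concentration argument is needed there, because only $(\partial_y v,\partial_y u_n)_{\mathcal{H}_0}$ survives for large $n$.) The missing step is quantitative: undo the self-similar scaling via $\psi_s(y,z)=e^{s/4}u_s(e^{s/2}y,z)$, apply the Hardy inequality \eqref{RemarkHardy} to $u_s$ to obtain $J^{(0)}_s[\psi_s]\geq e^{s}c_H\|\rho_s\psi_s\|^2_{\mathcal{H}_0}$ with $\rho_s(y):=(1+e^{s}y^2)^{-1/2}$, combine this with the uniform bound on $J^{(0)}_s[\psi_s]$ to get $\|\rho_s\psi_s\|^2_{\mathcal{H}_0}\leq Ce^{-s}$, hence $\|\varphi_s\|^2_{L^2(I_s)}\leq Ce^{-s}$ on the shrinking interval $I_s:=e^{-s/2}I$ with $I$ a bounded interval containing $\mathrm{supp}\,f'$, and finally rescale: $e^{s/2}\|\varphi_s\|^2_{L^2(I_s)}=\|\varphi_s(e^{-s/2}\cdot)\|^2_{L^2(I)}\to |I|\,|\varphi_\infty(0)|^2$, using the strong $L^2$-convergence and the uniform $H^1$-bound. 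Only this forces $\varphi_\infty(0)=0$ and selects $l_D$ over $l$; without it the proof does not close.

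There are also smaller inaccuracies. The quantity $e^{s_n/2}\|\partial_z u_n\|_{\mathcal{H}_0}$ is \emph{not} bounded: only the combination $e^{s_n}\bigl(\|\partial_z u_n\|^2_{\mathcal{H}_0}-E_1\|u_n\|^2_{\mathcal{H}_0}\bigr)$ is, and $\|\partial_z u_n\|^2_{\mathcal{H}_0}\geq E_1\|u_n\|^2_{\mathcal{H}_0}$ can remain of order one. Likewise the energy bounds give $\|\phi_n\|_{\mathcal{H}_0}+\|\partial_z\phi_n\|_{\mathcal{H}_0}=O(e^{-s_n/2})$ but only $\|\partial_y\phi_n\|_{\mathcal{H}_0}=O(1)$, not $\|\phi_n\|_{\mathcal{H}^1_0}\to 0$. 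Finally, even the uniform bound on $\|\partial_y u_n\|_{\mathcal{H}_0}$ is not free: after the Young-type splitting of $\|\partial_y u_n-\sigma_{s_n}\partial_z u_n\|^2_{\mathcal{H}_0}$ one is left with a term of size $e^{s_n}\|u_n\|^2_{L^2(I_{s_n}\times(0,d))}$, which is controlled precisely by the Hardy-based estimate above — a second place where \eqref{RemarkHardy} is indispensable and which your outline omits.
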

\begin{proof}
The proof is inspired by the proof of Proposition 5.4 in the paper \cite{TwistedTubesZuazua}. For any fixed $F_s \in \mathcal{H}_0$ and sufficiently large positive number $p$ we set $\psi_s := (T^{(0)}_s+p)^{-1}F_s,$ which means, that $\psi_s$ satisfies the resolvent equation
\begin{equation}
\label{asymptoticsresolvent}
    \forall v \in \mathcal{D}(J^{(0)}_s), \quad  J^{(0)}_s(v,\psi_s)+p(v,\psi_s)_{\mathcal{H}_0} = (v,F_s)_{\mathcal{H}_0}.
\end{equation}

If we choose $v:= \psi_s$ we obtain
\begin{align*}
    &\|\p_y\psi_s-\sigma_s\p_z \psi_s\|^2_{\mathcal{H}_0}+\frac{1}{16}\|y\psi_s\|^2_{\mathcal{H}_0}+e^s(\|\p_z \psi_s\|^2_{\mathcal{H}_0}-E_1\|\psi_s\|^2_{\mathcal{H}_0})+p\|\psi_s\|^2_{\mathcal{H}_0} \\
    &=(\psi_s,F_s)_{\mathcal{H}_0}
    \leq \frac{1}{4}\|\psi_s\|^2_{\mathcal{H}_0} + \|F_s\|^2_{\mathcal{H}_0}
    = \frac{1}{4}\|\psi_s\|^2_{\mathcal{H}_0} + 1.
\end{align*}
We rewrite the inequality as
\begin{equation}
\label{boundineq}
    \|\p_y\psi_s-\sigma_s\p_z \psi_s\|^2_{\mathcal{H}_0}+\frac{1}{16}\|y\psi_s\|^2_{\mathcal{H}_0}+e^s(\|\p_z \psi_s\|^2_{\mathcal{H}_0}-E_1\|\psi_s\|^2_{\mathcal{H}_0})+\left(p-\frac{1}{4}\right)\|\psi_s\|^2_{\mathcal{H}_0} \leq 1.
\end{equation}
Henceforth we assume that $p>\frac{1}{4}.$
Next, we use the decomposition 
\[\psi_s(y,z) = \varphi_s(y)\mathcal{J}_1(z) + \phi_s(y,z), \]
where $\phi_s \in \mathfrak{h}^{\perp},$ which implies 
\[\forall y \in \mathbb{R}, \quad (\mathcal{J}_1,\phi_s(y,\cdot))_{L^2((0,d))} =0 .\]
Now for $\epsilon \in (0,1)$ using the orthogonality we obtain
\begin{align*}
\|\p_ z \psi_s\|^2_{\mathcal{H}_0}-E_1\|\psi_s\|^2_{\mathcal{H}_0} 
&= \| \varphi_s\p_ z \mathcal{J}_1\|_{\mathcal{H}_0}+\|\p_ z \phi_s\|^2_{\mathcal{H}_0}-E_1\|\varphi_s \mathcal{J}_1\|_{\mathcal{H}_0}-E_1\|\phi_s\|^2_{\mathcal{H}_0}  \\
&= \epsilon \|\p_ z \phi_s\|^2_{\mathcal{H}_0}+(1-\epsilon)\|\p_ z  \phi_s\|^2_{\mathcal{H}_0}-E_1\|\phi_s\|^2_{\mathcal{H}_0} \\ 
&\geq   \epsilon \|\p_ z \phi_s\|^2_{\mathcal{H}_0}+\left[(1-\epsilon)E_2-E_1\right]\|\phi_s\|^2_{\mathcal{H}_0},
\end{align*}
where $E_2 = \left(\frac{2\pi}{d}\right)^2$ denotes the second eigenvalue of $-\Delta_D^{(0,d)}$ (cf. \eqref{spectrum0d}), and the last inequality follows from the minimax principle. Since $E_1$ is strictly less than $E_2$ we can choose $\epsilon$ so small that $[(1-\epsilon)E_2-E_1]\geq 0$ and use estimate \eqref{boundineq}:
\begin{align*}
    e^s[(1-\epsilon)E_2-E_1]\|\phi_s\|^2_{\mathcal{H}_0}  &\leq \epsilon \|\p_ z \phi_s\|^2_{\mathcal{H}_0}+\left[(1-\epsilon)E_2-E_1\right]\|\phi_s\|^2_{\mathcal{H}_0} \\
    &\leq \|\p_ z \psi_s\|^2_{\mathcal{H}_0}-E_1\|\psi_s\|^2_{\mathcal{H}_0} \\
    &\leq 1
\end{align*}
and thus
\begin{equation}
\label{odhadyphi}
    \|\phi_s\|^2_{\mathcal{H}_0} \leq Ce^{-s},
\end{equation}
where $C$ is a constant depending on $d.$  Similarly we obtain
\begin{align}
\label{odhadydzphi}
  \|\p_ z\phi_s\|^2_{\mathcal{H}_0} &\leq Ce^{-s},\\
\|y\phi_s\|^2_{\mathcal{H}_0} &\leq C, \label{odhadyyphi}\\
\|\varphi_s\|^2_{L^2(\mathbb{R})} &\leq C, \label{odhadyvarphi}\\
\|y\varphi_s\|^2_{L^2(\mathbb{R})} &\leq C. \label{odhadyyvarphi}
\end{align}
As a next step, we define a new function $u_s \in \mathcal{H}_0$ and new variables $(x,z) := (e^{s/2}y,z):$
\[\psi_s(y,z) = e^{s/4}u_s(e^{s/2}y,z).\]
For the form $J^{(0)}_s[\psi_s]$ using the Hardy inequality (cf. Theorem \ref{TheoremHardy} and \eqref{RemarkHardy}) we get
\begin{align*}
  J^{(0)}_s[\psi_s]  &= e^s\|\p_x u_s-f'\p_z u_s\|^2_{\mathcal{H}_0}+\frac{e^{-s}}{16}\|x u_s\|^2_{\mathcal{H}_0}+e^s(\|\p_z u_s\|^2_{\mathcal{H}_0}-E_1\|u_s\|^2_{\mathcal{H}_0})  \\
&\geq e^s\left(\|\p_x u_s-f'\p_z u_s\|^2_{\mathcal{H}_0}+\|\p_z u_s\|^2_{\mathcal{H}_0}-E_1\|u_s\|^2_{\mathcal{H}_0}\right) \\
&\geq e^s c_H \|\rho u_s\|^2_{\mathcal{H}_0}  \\
&= e^s c_H\|\rho_s \psi_s\|^2_{\mathcal{H}_0},
\end{align*}
where $\rho_s(y,z) := \rho (e^{s/2}y,z) $ and $c_H$ is positive. Using the inequality \eqref{boundineq} we obtain
\begin{equation}
\label{rho}
     \|\rho_s \psi_s\|^2_{\mathcal{H}_0} \leq C e^{-s},
\end{equation}
where $C$ depends on $f'$ and $d.$
Furthermore, for $\epsilon \in (0,1)$ we get
\begin{align*}
&\|\p_y\psi_s-\sigma_s\p_z \psi_s\|^2_{\mathcal{H}_0}+e^s(\|\p_ z \psi_s\|^2_{\mathcal{H}_0}-E_1\|\psi_s\|^2_{\mathcal{H}_0}) \\
&\geq \epsilon   \|\p_y\psi_s\|^2_{\mathcal{H}_0}+e^s \int_{\Omega}\left[\left(1-\frac{\epsilon}{1-\epsilon}(f'(e^{s/2}y))^2\right)|\p _z \psi_s(y,z)|^2-E_1|\psi_s|^2\right]\d y \d z,
\end{align*}
 where we used an elementary Young-type inequality in the form $2ab \leq (1-\epsilon)a^2+\frac{b^2}{1-\epsilon}.$
 Using the Fubini's theorem and Poincar\'e-type inequality  on $(0,d)$ we obtain
 \begin{align*}
     &\int_{\Omega}\left(1-\frac{\epsilon}{1-\epsilon}(f'(e^{s/2}y))^2\right)|\p _z \psi_s(y,z)|^2\d y \d z \\
     &\geq  E_1 \left(\int_{\mathbb{R}}\left[1-\frac{\epsilon}{1-\epsilon}(f'(e^{s/2}y))^2\right]\d y \right)\left(\int_0^d| \psi_s(y,z)|^2 \d z\right).
     \end{align*}
For $\epsilon<(1+\|f'\|^2_{L^{\infty}(\mathbb{R})})^{-1}$ the term  $\int_{\Omega}\left(1-\frac{\epsilon}{1-\epsilon}(f'(e^{s/2}y))^2\right)|\p _z \psi_s(y,z)|^2\d y \d z$ is positive and thus we can use previous inequality:
\begin{align}
  &\|\p_y\psi_s-\sigma_s\p_z \psi_s\|^2_{\mathcal{H}_0}+e^s(\|\p_ z \psi_s\|^2_{\mathcal{H}_0}-E_1\|\psi_s\|^2_{\mathcal{H}_0}) \label{odhadrho1} \\
  &\geq \epsilon \|\p_y\psi_s\|^2_{\mathcal{H}_0} - e^s\frac{\epsilon}{1-\epsilon}E_1\|f'\|^2_{L^{\infty}(\mathbb{R})}\|\psi_s\|^2_{L^2(I_s\times (0,d))}, \label{odhadrho2}
\end{align}
where $I_s:= e^{-s/2}I \equiv \{e^{-s/2}x \quad |\quad  x \in I\}$ with $I:= (\inf \text{supp}f',\sup \text{supp}f').$
Using a simple estimate 
\[\|\rho_s\psi_s\|^2_{\mathcal{H}_0} \geq  \min_{I_s}\rho_s\|\psi_s\|^2_{L^2(I_s\times (0,d))}\]
we obtain
\begin{equation}
\label{rhoIs}
 \|\psi_s\|^2_{L^2(I_s\times (0,d))} \leq C\|\rho_s\psi_s\|^2_{\mathcal{H}_0}, 
\end{equation}
where $C$ depends on $I.$
Finally, from 
\eqref{odhadrho1}--\eqref{odhadrho2} and \eqref{boundineq} we get
\begin{equation*}
    \|\p_y\psi_s\|^2_{\mathcal{H}_0} \leq C,
\end{equation*} 
where $C$ is a constant depending on $f'$ and $d.$ Again due to the orthogonality we have separate bounds
\begin{equation}
    \label{odhadyderivace}
    \|\p_y\phi_s\|^2_{\mathcal{H}_0} \leq C, \quad \|\varphi_s'\|^2_{L^2(\mathbb{R})} \leq C.
\end{equation}
From the bound \eqref{odhadyphi} we see that $\phi_s$ converges strongly in $\mathcal{H}_0$ as $s \to 0.$ Moreover, using also the bound \eqref{odhadyderivace} and \eqref{odhadyyphi} we see that $\{\phi_s\}_{s\geq 0}$ is a bounded family in $\mathcal{D}(J^{(0)}_s),$ which implies that there is a subsequence $\phi_{s_{k_n}},$ which converges weakly to zero. Using the strong convergence we observe that   $\phi_s$ converges weakly to zero in $\mathcal{D}(J^{(0)}_s)$ as $s \to \infty.$

Furthermore, the bounds \eqref{odhadyvarphi}--\eqref{odhadyyvarphi} and \eqref{odhadyderivace} imply that $\{\varphi_s\}_{s\geq 0}$ is a bounded family in $\mathcal{D}(\mathcal{Q}^l).$ Therefore, this set is precompact in the weak topology of $\mathcal{D}(\mathcal{Q}^l).$ Next, we denote $\varphi_\infty$ as a weak limit of $\{\varphi_{s_n}\}_{n \in \mathbb{N}}$ in $\mathcal{D}(\mathcal{Q}^l),$  where $s_n$ is an increasing sequence of positive numbers. Since $\mathcal{D}(\mathcal{Q}^l)$ is compactly embedded in $L^2(\mathbb{R})$ (because of the purely discrete spectrum of $l$), we may assume that it converges strongly in $L^2(\mathbb{R})$. From \eqref{rho}, \eqref{odhadrho1}--\eqref{odhadrho2} and the orthogonality we get
\[\|\varphi_s\|^2_{L^2(I_s)}\leq C e^{-s},\]
where $C$ depends on $f'$ and $d.$ As a next step, we multiply the inequality by $e^{s/2}:$ 
\begin{align*}
 e^{s/2}\|\varphi_s\|^2_{L^2(I_s)}  &= \|\varphi_s(e^{-s/2}x)-\varphi_\infty(e^{-s/2}x)\|^2_{L^2(I)}  + \|\varphi_\infty(e^{-s/2}x)\|^2_{L^2(I)} \\
 &+ \left(\varphi_s(e^{-s/2}x)-\varphi_\infty(e^{-s/2}x),\varphi_\infty(e^{-s/2}x)\right)_{L^2(I)}.
\end{align*}
If we take the limit $s \to \infty$ we obtain
\[\varphi_\infty(0) = 0 .\]
Finally, for arbitrary $\varphi \in C_0^\infty(\mathbb{R}\setminus 0)$ (cf. Remark \ref{RemarkCoreld}) we define $v(y,z) :=\varphi(y)\mathcal{J}_1(z) $ as a test function and replace $s$ by $s_n.$ Using the identity \eqref{asymptoticsresolvent}  and sending $n$ to infinity, it is easy to check that
\[(\varphi',\varphi'_\infty)_{L^2(\mathbb{R})} +  \frac{1}{16}(y\varphi,y\varphi_\infty)_{L^2(\mathbb{R})}+p(\varphi,\varphi_\infty)_{L^2(\mathbb{R})} = (\varphi,\hat{F})_{L^2(\mathbb{R})},\]
where $\hat{F}(y) :=(\mathcal{J}_1,F(y,\cdot))_{L^2((0,d))}. $ But it means that $\varphi_\infty = (l_D+p)^{-1}f$ for any weak limit point of $\{\varphi_s\}_{s\geq 0}.$
Consequently, $\psi_s$ converges strongly to $\psi_\infty$ in $\mathcal{H}_0$ as $s \to \infty,$ where $\psi_\infty(y,z) := \varphi_\infty(y)\mathcal{J}_1(z) = [(l_D+p)^{-1}\oplus 0^\perp]F.$
\end{proof}
\begin{theorem}
\label{theoremAsymptotics}
$(T^{(0)}_s+1)^{-1}$ converges to $(l_D+1)^{-1}\oplus 0^{\perp}$ uniformly in $\mathcal{H}_0.$
\end{theorem}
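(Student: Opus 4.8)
The plan is to obtain Theorem~\ref{theoremAsymptotics} as an immediate consequence of the abstract criterion Lemma~\ref{lemmaCazacu}, applied on the Hilbert space $\mathcal{H}:=\mathcal{H}_0$ to the family of bounded operators $T_s:=(T^{(0)}_s+1)^{-1}$ and to the limit operator $T:=(l_D+1)^{-1}\oplus 0^{\perp}$, with Proposition~\ref{PropositionAsymptotics} furnishing precisely the convergence hypothesis required by that lemma. So the only things left to do are (i) to check that $T_s$ and $T$ are admissible inputs for Lemma~\ref{lemmaCazacu}, i.e.\ that the $T_s$ are bounded and that $T$ is compact, and (ii) to match the statement of Proposition~\ref{PropositionAsymptotics} to the criterion of Lemma~\ref{lemmaCazacu}.

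First I would verify the standing assumptions of Lemma~\ref{lemmaCazacu}. Inspecting \eqref{formJ0}, all four terms of $J^{(0)}_s$ are non-negative: the first two are manifestly so, and for $e^s(\|\p_z v\|^2_{\mathcal{H}_0}-E_1\|v\|^2_{\mathcal{H}_0})$ this follows by applying the Poincar\'e inequality \eqref{poincare} on $(0,d)$ slicewise in $y$ (via Fubini, as elsewhere in the paper). Hence $T^{(0)}_s\ge 0$, so $-1\notin\sigma(T^{(0)}_s)$, each $T_s=(T^{(0)}_s+1)^{-1}$ is a bounded self-adjoint operator on $\mathcal{H}_0$ with $\|T_s\|_{\mathcal{H}_0\to\mathcal{H}_0}\le 1$, and $\{T_s\}_{s\ge 0}$ is a family of bounded operators (extended arbitrarily to $s<0$ if one insists on the indexing of Lemma~\ref{lemmaCazacu}, since only the behaviour as $s\to\infty$ is used). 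For the limit operator: $l\ge D_1=\tfrac14>0$ by \eqref{spectrumHarmonicOscilator} and $l_D\ge l$ in the form sense, so $(l_D+1)^{-1}$ is well defined; moreover $\mathcal{D}(\mathcal{Q}^l_D)\subset\mathcal{D}(\mathcal{Q}^l)$ with the same action, and $\mathcal{D}(\mathcal{Q}^l)$ embeds compactly into $L^2(\mathbb{R})$ because $l$ has purely discrete spectrum, so $l_D$ has compact resolvent and $(l_D+1)^{-1}$ is compact on $L^2(\mathbb{R})\simeq\mathfrak{h}_1$. Since the orthogonal sum of a compact operator with the zero operator is again compact, $T=(l_D+1)^{-1}\oplus 0^{\perp}$ is compact on $\mathcal{H}_0$, as Lemma~\ref{lemmaCazacu} requires.

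It then remains to verify the convergence hypothesis of Lemma~\ref{lemmaCazacu}: for every sequence $s_n\to\infty$ and every $f_n\in\mathcal{H}_0$ with $\|f_n\|_{\mathcal{H}_0}=1$ and $f_n\overset{w}{\longrightarrow}f$, one must have $T_{s_n}f_n\to Tf$ strongly in $\mathcal{H}_0$. But this is exactly what the proof of Proposition~\ref{PropositionAsymptotics} establishes: that argument extracts subsequences and passes to the limit along an arbitrary increasing sequence of times, so it applies verbatim to the data $(s_n,f_n)$, yielding $(T^{(0)}_{s_n}+1)^{-1}f_n\to[(l_D+1)^{-1}\oplus 0^{\perp}]f$ in $\mathcal{H}_0$, i.e.\ $T_{s_n}f_n\to Tf$. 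Invoking Lemma~\ref{lemmaCazacu} we conclude $\lim_{s\to\infty}\|T_s-T\|_{\mathcal{H}_0\to\mathcal{H}_0}=0$, which is the assertion.

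As for where the difficulty lies: there is no genuine obstacle at the level of this theorem — all of its substance has already been absorbed into Proposition~\ref{PropositionAsymptotics}, whose proof is the delicate part, in particular the $e^{-s}$ decay of the transverse component $\phi_s$ and the use of the Hardy inequality \eqref{RemarkHardy} to force the Dirichlet condition $\varphi_\infty(0)=0$ on the limiting profile. Given that proposition together with Lemma~\ref{lemmaCazacu}, Theorem~\ref{theoremAsymptotics} is a matter of notational matching and the elementary boundedness/compactness bookkeeping described above.
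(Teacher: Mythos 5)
Your proposal is correct and follows exactly the paper's route: the paper's own proof of Theorem~\ref{theoremAsymptotics} is precisely the observation that $(l_D+1)^{-1}\oplus 0^{\perp}$ is compact (since $l_D$ has purely discrete spectrum) followed by an appeal to Proposition~\ref{PropositionAsymptotics} and Lemma~\ref{lemmaCazacu}. Your additional bookkeeping (non-negativity of $J^{(0)}_s$ via the Poincar\'e inequality, hence $\|(T^{(0)}_s+1)^{-1}\|\le 1$) is left implicit in the paper but is a correct and welcome verification of the lemma's standing hypotheses.
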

\begin{proof}
Since $l_D$ has purely discrete spectrum, the operator $(l_D+1)^{-1}\oplus 0^{\perp}$ is compact. Finally, we use Proposition \ref{PropositionAsymptotics} and Lemma \ref{lemmaCazacu}.
\end{proof}
\begin{corollary}
\label{Corollary3/4}
  Let $f \in C^{0,1}(\mathbb{R}),$  where $f'$ has compact support and $f' \neq 0.$ Then 
\begin{equation*}
    \lim _{s \to \infty}\mu (s) = 3/4.
\end{equation*}
\end{corollary}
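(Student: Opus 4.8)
The plan is to read off $\lim_{s\to\infty}\mu(s)$ directly from the uniform resolvent convergence established in Theorem \ref{theoremAsymptotics}. First I would record that the form $J^{(0)}_s$ in \eqref{formJ0} is non-negative: Fubini's theorem together with the Poincar\'e inequality \eqref{poincare} on $(0,d)$ gives $e^s\|\p_z v\|^2_{\mathcal{H}_0}-E_1e^s\|v\|^2_{\mathcal{H}_0}\ge 0$, while the terms $\|\p_yv-\sigma_s\p_z v\|^2_{\mathcal{H}_0}$ and $\frac{1}{16}\|yv\|^2_{\mathcal{H}_0}$ are manifestly non-negative. Hence $T^{(0)}_s\ge 0$, so $-1$ lies in its resolvent set and $\mu(s)\ge 0$; and since $T^{(0)}_s$ has purely discrete spectrum with bottom eigenvalue $\mu(s)$, the compact positive self-adjoint operator $(T^{(0)}_s+1)^{-1}$ has operator norm exactly $(1+\mu(s))^{-1}$ (the function $t\mapsto(1+t)^{-1}$ is decreasing on $[0,\infty)$).

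Next I would combine this with Theorem \ref{theoremAsymptotics}: since $\|(T^{(0)}_s+1)^{-1}-[(l_D+1)^{-1}\oplus 0^{\perp}]\|_{\mathcal{H}_0\to\mathcal{H}_0}\to 0$, continuity of the operator norm gives $(1+\mu(s))^{-1}\to\|(l_D+1)^{-1}\oplus 0^{\perp}\|_{\mathcal{H}_0\to\mathcal{H}_0}$ as $s\to\infty$. The limiting operator is block diagonal for $\mathcal{H}_0=\mathfrak{h}_1\oplus\mathfrak{h}_1^{\perp}$, equal to $(l_D+1)^{-1}$ on $\mathfrak{h}_1\cong L^2(\mathbb{R})$ and to $0$ on $\mathfrak{h}_1^{\perp}$, so its norm is $\|(l_D+1)^{-1}\|_{L^2(\mathbb{R})\to L^2(\mathbb{R})}=(1+\inf\sigma(l_D))^{-1}$, using that $l_D\ge 0$ has purely discrete spectrum.

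It then remains to identify $\inf\sigma(l_D)=3/4=D_2$. For the upper bound I would exhibit the explicit trial function $\varphi(y):=y\,e^{-y^2/8}$, which vanishes at the origin, lies in $\mathcal{D}(\mathcal{Q}^l_D)$ by Remark \ref{RemarkCoreld}, and satisfies $l\varphi=\frac{3}{4}\varphi$, so that $\mathcal{Q}^l_D[\varphi]=\frac{3}{4}\|\varphi\|^2_{L^2(\mathbb{R})}$ and hence $\inf\sigma(l_D)\le 3/4$. For the lower bound I would use the ground-state substitution: for $\psi\in C_0^\infty((0,\infty))$, writing $\psi=\varphi\chi$ with $\chi\in C_0^\infty((0,\infty))$ and integrating by parts (using $-\varphi''+\frac{1}{16}y^2\varphi=\frac{3}{4}\varphi$) gives $\mathcal{Q}^l[\psi]=\|\varphi\chi'\|^2_{L^2((0,\infty))}+\frac{3}{4}\|\psi\|^2_{L^2((0,\infty))}\ge\frac{3}{4}\|\psi\|^2_{L^2((0,\infty))}$, and symmetrically on $C_0^\infty((-\infty,0))$ with the reflected ground state. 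An arbitrary $\psi\in C_0^\infty(\mathbb{R}\setminus\{0\})$ is the sum of its restrictions to $(0,\infty)$ and $(-\infty,0)$, which have disjoint supports and are therefore orthogonal in every term of $\mathcal{Q}^l$, so $\mathcal{Q}^l[\psi]\ge\frac{3}{4}\|\psi\|^2_{L^2(\mathbb{R})}$; by Remark \ref{RemarkCoreld} this extends by density to all of $\mathcal{D}(\mathcal{Q}^l_D)$, i.e. $l_D\ge 3/4$. (Equivalently, $l_D$ commutes with parity and reduces in each parity sector to the Dirichlet harmonic oscillator on a half-line, whose eigenfunctions are the odd Hermite functions and whose spectrum is $\{D_{2k}\}_{k\ge 1}$.) Putting the pieces together, $(1+\mu(s))^{-1}\to(1+\frac{3}{4})^{-1}=\frac{4}{7}$, hence $\mu(s)\to 3/4$.

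The main obstacle is this last identification. The soft bound coming from coercivity and the representation theorem only yields $\inf\sigma(l_D)\ge\inf\sigma(l)=D_1=1/4$, so one genuinely has to exploit the Dirichlet condition at the origin — which the limiting operator inherited from the Hardy inequality of Theorem \ref{TheoremHardy} — to pin the value down to $3/4$; everything else is routine spectral-theoretic bookkeeping, and this step is precisely the mechanism producing the improvement over the straight strip, where the limiting operator is $l$ itself with bottom $D_1=1/4$.
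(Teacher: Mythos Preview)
Your proposal is correct and follows essentially the same route as the paper: use the uniform resolvent convergence of Theorem~\ref{theoremAsymptotics} to conclude that $\mu(s)$ tends to the bottom of $\sigma(l_D)$, and then identify that bottom with $D_2=3/4$ via the odd first excited state of the harmonic oscillator. The paper compresses both steps into two sentences (invoking norm--resolvent convergence $\Rightarrow$ spectral convergence, and the parity of the Hermite functions), whereas you spell them out explicitly through the operator-norm identity $\|(T^{(0)}_s+1)^{-1}\|=(1+\mu(s))^{-1}$ and a ground-state substitution on each half-line for the lower bound $l_D\ge 3/4$; this extra detail is sound and does not depart from the paper's strategy.
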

\begin{proof}
Theorem \ref{theoremAsymptotics} implies that $\mu (s)$ converges to the first eigenvalue of $l_D.$ Since the spectrum of the operator $l$ is well known (cf. \eqref{spectrumHarmonicOscilator}), it is easy to see that the first eigenvalue of $l_D$ coincides with the second eigenvalue of $l,$ in particular $3/4.$ (The first eigenvector of $l$ does not belong to the domain $\mathcal{D}(l_D)$ because it does not satisfy the Dirichlet condition at $0,$ however, the second one does).
\end{proof}
\subsection{The improved decay rate}
\label{Sec.ImprovedDecay}
Finally, we can prove Theorem \ref{mainResult}. Corollary \ref{Corollary3/4} implies that for arbitrary small $\epsilon >0$ there is a large $s_\epsilon >0,$ such that $\forall s\geq s_\epsilon : \mu(s) \geq 3/4- \epsilon.$ Thus for fixed $\epsilon >0$ and $\forall s \geq s_\epsilon$ we have:
\begin{align*}
    -\int_0^s\mu(\tau)\d \tau &= -\int_0^{s_\epsilon}\mu(\tau)\d \tau  -\int_{s_\epsilon}^s\mu(\tau)\d \tau  \\
    & \leq -\int_0^{s_\epsilon}\mu(\tau)\d \tau - (3/4-\epsilon)(s-s_\epsilon) \\
    &\leq (3/4-\epsilon)s_\epsilon - (3/4-\epsilon)s,
\end{align*}
where we used in the second estimate the fact that $\mu(s)$ is non-negative, which follows from the positivity of the form $J_s^{(0)}$ for every $s \geq 0.$  At the same time, assuming $\epsilon \leq 3/4$ we obtain for all $s \leq s_\epsilon$
\begin{equation*}
    -\int_0^s\mu(\tau)\d \tau \leq 0 \leq  (3/4-\epsilon)s_\epsilon - (3/4-\epsilon)s. 
\end{equation*}
Using \eqref{energyEstimate} we get $\forall s \in [0,\infty)$
\begin{equation}
\label{energyEstimate3/4}
    \|w(s)\|_{\mathcal{H}_1} \leq C_\epsilon e^{-(3/4-\epsilon)s}\|w_0\|_{\mathcal{H}_1},
\end{equation}
where $C_\epsilon:= e^{s_\epsilon}\geq e^{(3/4-\epsilon)s_\epsilon} $. As a next  step, we return to the original variables $(x,z, t)$ (recall that $w_0 = u_0$). Using the unitarity of the self-similar transformation and the point-wise estimate $K = e^{y^2/4} \geq 1$ we obtain $\forall t \in [0, \infty)$
\begin{equation*}
    \|u(t)\|_{\mathcal{H}_0} = \|w(s)\|_{\mathcal{H}_0} \leq \|w(s)\|_{\mathcal{H}_1} \leq C_\epsilon (1+t)^{-(3/4-\epsilon)s}\|u_0\|_{\mathcal{H}_1}.
\end{equation*}
Finally, because the weight $K = e^{y^2/4}$ depends only on the longitudinal variable, it is invariant by the mapping $\mathcal{L}$ (cf. \eqref{TrafoL}). We conclude with 
\begin{equation*}
    \|S(t)\|_{L^2(\Omega, K) \mapsto L^2(\Omega)} = \sup_{u_0 \in L^2(\Omega, K)\setminus \{0\}}\frac{\|u(t)\|_{L^2(\Omega)}}{\|u_0\|_{L^2(\Omega, K)}} \leq C_\epsilon (1+t)^{-(3/4-\epsilon)},
\end{equation*}
for every $t \in [0, \infty).$ Since we can choose $\epsilon$ arbitrary small we get $\Gamma(\Omega_f) \geq 3/4.$

\subsection{The improved decay rate - an alternative statement}
Main Theorem \ref{mainResult} tells us that the extra polynomial decay rate of solution $u$ of \eqref{heatEquation} in a locally sheared strip is at least three times better that in the straight strip. However, there is no control over the constant in \eqref{mainResultRefolmulated}. In this subsection we present an alternative result, where we get rid of the constant $C_\Gamma$ but we also loose a qualitative knowledge about the decay rate:
\begin{theorem}
\label{alternativeResult}
Let  $f \in C^{0,1}(\mathbb{R}),$ where $f'$ has compact support. Then for every $ t \geq 0$
\begin{equation}
\label{mainAlternativeInequality}
    \|S(t)\|_{L^2(\Omega_f,K) \mapsto L^2(\Omega_f)} \leq (1+t)^{-(\gamma+1/4)},
\end{equation}
where $\gamma \geq 0$ depending on $f'$ and $d.$ Moreover, $\gamma$ is positive if and only if $\Omega_f$ is sheared.
\end{theorem}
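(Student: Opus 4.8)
The plan is to read the bound off the energy estimate of Section~3. Combining \eqref{energyEstimate} with the passage to the original variables carried out in Subsection~\ref{Sec.ImprovedDecay} (using $K\ge 1$ and the unitarity of the self-similar transformation), one obtains, for every $t\ge 0$,
\begin{equation*}
    \|S(t)\|_{L^2(\Omega_f,K)\to L^2(\Omega_f)}\le \exp\!\Big(-\!\int_0^{s}\mu(\tau)\,\d\tau\Big),\qquad s=\ln(1+t).
\end{equation*}
So it suffices to produce $\gamma\ge 0$, depending only on $f'$ and $d$ and strictly positive exactly when $f'\not\equiv 0$, with $\int_0^s\mu(\tau)\,\d\tau\ge(\gamma+\tfrac14)s$ for all $s\ge 0$. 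Taking $\gamma:=\inf_{s\ge0}\mu(s)-\tfrac14$ makes this automatic, so the theorem reduces to: (a) $\mu(s)\ge\tfrac14$ for every $s\ge0$ (giving $\gamma\ge0$, with $\gamma=0$ when $f'\equiv0$ since then $\mu\equiv\tfrac14$); and (b) $\inf_{s\ge0}\mu(s)>\tfrac14$ whenever $f'\not\equiv0$.

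For (a) I would keep the form $J^{(1)}_s$ from \eqref{formJ1} and use the decomposition $\mathcal H_0=\mathfrak h_1\oplus\mathfrak h_1^\perp$, $w=\varphi\,\mathcal J_1+\phi$, as in the proof of Corollary~\ref{Corollary3/4}. Two one-dimensional facts supply the $\tfrac14$: the Poincar\'e gap on $(0,d)$, giving $\|\p_z w\|^2_{\mathcal H_1}-E_1\|w\|^2_{\mathcal H_1}=\|\p_z\phi\|^2_{\mathcal H_1}-E_1\|\phi\|^2_{\mathcal H_1}\ge(E_2-E_1)\|\phi\|^2_{\mathcal H_1}$; and the identity $\int_{\mathbb R}|\varphi'|^2K\,\d y=\|\psi'\|^2_{L^2(\mathbb R)}+\tfrac14\|\psi\|^2_{L^2(\mathbb R)}+\tfrac1{16}\|y\psi\|^2_{L^2(\mathbb R)}$ with $\psi=K^{1/2}\varphi$, which together with $D_1=\tfrac14\in\sigma(l)$ (cf.\ \eqref{spectrumHarmonicOscilator}) yields $\|\p_y w\|^2_{\mathcal H_1}\ge\tfrac12\|w\|^2_{\mathcal H_1}$ mode by mode. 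Since $J^{(1)}_s$ carries the shift $-\tfrac14\|w\|^2_{\mathcal H_1}$, these are precisely the reservoirs of coercivity needed, and for $f'\equiv0$ the two blocks decouple and both bounds are attained (by $\varphi\propto e^{-y^2/4}$, $\phi=0$), so $\mu\equiv\tfrac14$. When $f'\not\equiv0$ the shear couples the blocks through $\|\p_y w-\sigma_s\p_z w\|^2_{\mathcal H_1}$, and the off-diagonal contributions must be reabsorbed using the boundedness and compact support of $\sigma_s$ and, for the $\phi$-block, the coercivity available both from the gap $E_2-E_1$ (amplified by $e^s\ge1$) and from the oscillator bound for $\p_y\phi$. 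I expect making this absorption work \emph{uniformly in $s$}, in particular for large $\|f'\|_{L^\infty}$ and large $d$, to be the technically heaviest point of the proof.

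For (b) I would first establish the pointwise strict bound $\mu(s)>\tfrac14$ for each fixed $s$: chasing the equality cases in (a), $\mu(s)=\tfrac14$ would force $w$ to be the harmonic-oscillator ground state times $\mathcal J_1$ \emph{and} $\p_y w=\sigma_s\p_z w$ on $\operatorname{supp}\sigma_s$, which is impossible for $w\ne0$ once $\sigma_s\not\equiv0$ — here one invokes the Hardy inequality, equivalently $-(\p_y-\sigma_s\p_z)^2-\p_z^2\ge E_1+c_s(1+y^2)^{-1}$ with $c_s>0$ for the strip sheared by the antiderivative of $\sigma_s$, which is again $C^{0,1}$ with compactly supported derivative, so Theorem~\ref{TheoremHardy}/\cite{ShearBriet} applies. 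To upgrade this to a uniform bound I would combine it with Corollary~\ref{Corollary3/4}, i.e.\ $\mu(s)\to\tfrac34$ as $s\to\infty$, and with the continuity of $s\mapsto\mu(s)$ (continuous dependence of $J^{(1)}_s$ on $s$, which follows from $\sigma_s\to\sigma_{s_0}$ in $L^2(\mathbb R)$ by continuity of dilations, and the compactness of the resolvent of $T^{(0)}_s$): for $S$ large enough $\mu\ge\tfrac34-\tfrac14>\tfrac14$ on $[S,\infty)$, while on the compact interval $[0,S]$ the continuous function $\mu$ attains its minimum, which is $>\tfrac14$ by the pointwise bound; hence $\gamma=\inf_{s\ge0}\mu(s)-\tfrac14>0$, and it manifestly depends only on $f'$ and $d$. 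The real obstacle, as flagged, is quantitative: either one tracks the off-diagonal and Hardy contributions in (a)--(b) by explicit quantities whose infimum over $s\in[0,\infty)$ is strictly positive, or one establishes the continuity/lower semicontinuity of $\mu$ carefully enough for the compactness argument to close.
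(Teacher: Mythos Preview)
Your overall strategy---set $\gamma:=\inf_{s\ge0}\mu(s)-\tfrac14$, feed this into \eqref{energyEstimate}, and deduce strict positivity of $\gamma$ in the sheared case from the pointwise bound $\mu(s)>\tfrac14$, Corollary~\ref{Corollary3/4}, and continuity of $s\mapsto\mu(s)$---is exactly the paper's. The difference lies in how the key inequality $\mu(s)\ge\tfrac14$ is obtained.

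You propose to stay in $\mathcal H_1$, decompose $w=\varphi\mathcal J_1+\phi$, and combine the oscillator bound $\|\p_y w\|^2_{\mathcal H_1}\ge\tfrac12\|w\|^2_{\mathcal H_1}$ with the Poincar\'e gap, then reabsorb the shear cross-terms coming from $\|\p_y w-\sigma_s\p_z w\|^2_{\mathcal H_1}$. You correctly flag this absorption, uniform in $s$, as the heavy step. The paper sidesteps it entirely: it performs a \emph{second} unitary conjugation $\mathcal U_{-1}:\mathcal H_0\to\mathcal H_{-1}$, $w\mapsto K^{1/2}w$, under which the associated form becomes
\[
J^{(-1)}_s[w]=\|\p_y w-\sigma_s\p_z w\|^2_{\mathcal H_{-1}}+e^s\big(\|\p_z w\|^2_{\mathcal H_{-1}}-E_1\|w\|^2_{\mathcal H_{-1}}\big)+\tfrac14\|w\|^2_{\mathcal H_{-1}}.
\]
The sign in front of $\tfrac14$ has flipped and the sheared gradient term keeps its form, so $J^{(-1)}_s-\tfrac14\ge0$ is immediate from Poincar\'e on $(0,d)$; no decomposition and no cross-term absorption are needed. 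The equality analysis is then elementary (forcing $\phi=0$, then $\varphi'=0$ and $\sigma_s\varphi=0$), so the paper does not invoke the Hardy inequality for this step, whereas you reach for it. Your route would work but is considerably more laborious; the paper's extra conjugation is the trick that removes the obstacle you identified.
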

In the setting of self-similar solutions (cf. \eqref{selfSimilarTrafo} and \eqref{energyEstimate}) 
we have to show that $\mu (s) \geq 1/4$ for both sheared and straight strip. Thus it is natural to study the shifted operator $T^{(0)}_s-1/4.$ However, it is not obvious from \eqref{formJ0} that such an operator is non-negative. We introduce another unitary transformation $\mathcal{U}_{-1}:\mathcal{H}_0 \to H_{-1},$ which acts in the same way as $\mathcal{U}_0:$
\begin{equation}
\label{unitarytrafoU-1}
    (\mathcal{U}_{-1}v)(y,z):= K^{1/2}(y)v(y,z).
\end{equation}
Next, we introduce the self-adjoint operator $T^{(-1)}_s$ in $\mathcal{H}_{-1}$ via the unitary transformation \eqref{unitarytrafoU-1}
\begin{equation*}
    T^{(-1)}_s:= \mathcal{U}_{-1}T^{(0)}_s({U}_{-1})^{-1}.
\end{equation*}
The operator $T^{(-1)}_s$ is associated with the quadratic form $\mathcal{J}^{(-1)}_s[w]:= \mathcal{J}^{(0)}_s[(\mathcal{U}_{-1})^{-1}w],$ where $w \in \mathcal{D}(\mathcal{J}^{(-1)}_s):=\mathcal{U}_{-1}\mathcal{D}(\mathcal{J}^{(0)}_s).$ Again it is straightforward to check that
\begin{equation}
\label{formJ-1}
    \mathcal{J}^{(-1)}_s[w] = \|\p_y w-\sigma_s\p_z w\|^2_{\mathcal{H}_{-1}}+e^s\|\p_z w\|^2_{\mathcal{H}_{-1}}-E_1e^s\|w\|^2_{\mathcal{H}_{-1}}+\frac{1}{4}\|w\|^2_{\mathcal{H}_{-1}}
\end{equation}
for every $w \in \mathcal{D}(J^{(-1)}_s).$ Now it is easy to see from the structure of the quadratic form that the shifted operator $T^{(-1)}_s-1/4$ is non-negative. Moreover, it is positive if and only if the strip is sheared.
\begin{proposition}
\label{PropositionAlternative}
Let  $f \in C^{0,1}(\mathbb{R}),$ where $f'$ has compact support. If $f' \neq 0,$ then $\mu(s) > 1/4$ for all $s \in [0,\infty).$ On the other hand, in the case $f'=0$ we have $\mu(s) = 1/4$ for all $s \in [0,\infty).$
\end{proposition}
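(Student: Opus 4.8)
The plan is to analyze the two cases separately, using the fact that $\mu(s)$ is the lowest point of the spectrum of $T^{(-1)}_s$ (equivalently of $T^{(0)}_s$ and $T^{(1)}_s$, since all three are unitarily equivalent), which by the variational principle equals $\inf_{w}\mathcal{J}^{(-1)}_s[w]/\|w\|^2_{\mathcal{H}_{-1}}$ over $w\in\mathcal{D}(\mathcal{J}^{(-1)}_s)\setminus\{0\}$. Working with $\mathcal{J}^{(-1)}_s$ as displayed in \eqref{formJ-1} is convenient precisely because the troublesome $-\frac14$ has turned into $+\frac14$.

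\textbf{The straight case $f'=0$.} Here $\sigma_s\equiv 0$, so
\[
\mathcal{J}^{(-1)}_s[w] = \|\p_y w\|^2_{\mathcal{H}_{-1}} + e^s\big(\|\p_z w\|^2_{\mathcal{H}_{-1}}-E_1\|w\|^2_{\mathcal{H}_{-1}}\big) + \tfrac14\|w\|^2_{\mathcal{H}_{-1}}.
\]
By Fubini and the Poincar\'e inequality \eqref{poincare} on $(0,d)$, the middle bracket is non-negative, so $\mathcal{J}^{(-1)}_s[w]\geq \tfrac14\|w\|^2_{\mathcal{H}_{-1}}$, giving $\mu(s)\geq 1/4$. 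For the reverse inequality I would exhibit, for the operator $T^{(1)}_s$ acting in $\mathcal{H}_1$, the genuine ground state: in $\mathcal{H}_1$ the form $J^{(1)}_s$ decomposes (after the transverse decomposition $w=\varphi(y)\mathcal{J}_1(z)+\phi$) into a one-dimensional piece whose operator is $-\p_y^2 + \tfrac12 y\p_y$ on $\mathcal{H}_1$, which is the Ornstein--Uhlenbeck-type operator whose lowest eigenvalue is $0$ with constant eigenfunction, plus the $-\tfrac14$ shift — hence $\mu(s) = 1/4$ exactly, with the minimizer $w = \mathcal{J}_1(z)$ (constant in $y$), which indeed lies in $\mathcal{H}^1_1$. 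Alternatively, and perhaps more cleanly, one tests $\mathcal{J}^{(-1)}_s$ with $w_n(y,z)=\chi_n(y)\mathcal{J}_1(z)$ where $\chi_n$ is a cutoff approximating the Gaussian-free constant profile in $\mathcal{H}_{-1}$: the $e^s$-bracket vanishes identically for such $w_n$ because $\p_z[\chi_n\mathcal{J}_1]=\chi_n\p_z\mathcal{J}_1$ and $\|\chi_n\p_z\mathcal{J}_1\|^2_{\mathcal{H}_{-1}} = E_1\|\chi_n\mathcal{J}_1\|^2_{\mathcal{H}_{-1}}$, leaving only $\|\p_y\chi_n\cdot\mathcal{J}_1\|^2_{\mathcal{H}_{-1}}/\|\chi_n\mathcal{J}_1\|^2_{\mathcal{H}_{-1}}+\tfrac14\to \tfrac14$.

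\textbf{The sheared case $f'\neq 0$.} The non-negativity $\mu(s)\geq 1/4$ follows from \eqref{formJ-1} together with the same Poincar\'e/Fubini argument applied after bounding $\|\p_y w - \sigma_s\p_z w\|^2_{\mathcal{H}_{-1}}\geq 0$ and using that the $e^s$-bracket $\|\p_z w\|^2 - E_1\|w\|^2$ is still non-negative; so it suffices to prove the inequality is strict. Suppose for contradiction $\mu(s_0)=1/4$ for some $s_0$. Since $T^{(-1)}_{s_0}-\tfrac14$ is a non-negative self-adjoint operator with compact resolvent (the discreteness being inherited from the earlier propositions), the infimum is attained: there is $w\neq 0$ in $\mathcal{D}(\mathcal{J}^{(-1)}_{s_0})$ with $\mathcal{J}^{(-1)}_{s_0}[w]=\tfrac14\|w\|^2_{\mathcal{H}_{-1}}$. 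Transferring back to $T^{(0)}_{s_0}$ via $\mathcal{U}_{-1}^{-1}$ and then to the original curvilinear picture via the inverse self-similar substitution $\psi(y,z)=e^{s_0/4}u(e^{s_0/2}y,z)$ (as in the proof of Proposition \ref{PropositionAsymptotics}), the condition $\mathcal{J}^{(0)}_{s_0}[\cdot]=\tfrac14\|\cdot\|^2_{\mathcal{H}_0}$ forces each of the non-negative terms in \eqref{formJ0} beyond the $\tfrac14$-multiple to vanish; in particular $e^{s_0}\big(\|\p_x u - f'\p_z u\|^2_{\mathcal{H}_0}+\|\p_z u\|^2_{\mathcal{H}_0}-E_1\|u\|^2_{\mathcal{H}_0}\big)=0$. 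But by the Hardy inequality \eqref{RemarkHardy} (valid precisely because $f'\neq 0$), this bracket is bounded below by $c_H\|\rho u\|^2_{\mathcal{H}_0}>0$ unless $u\equiv 0$, a contradiction. Hence $\mu(s)>1/4$ for every $s\geq 0$.

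\textbf{Main obstacle.} The delicate point is the strictness argument: one must be careful that the minimizer really exists (needs the discreteness of the spectrum, already established) and that pushing the minimizer through the two unitary maps and the self-similar rescaling keeps it in the form domain on which the Hardy inequality \eqref{RemarkHardy} applies — the Hardy inequality is stated for $\psi\in H^1_0(\Omega_f)$, so I would need to check that the rescaled minimizer lands in that space (it does, since $\mathcal{D}(\mathcal{J}^{(0)}_s)$ is the $C^\infty_0(\Omega)$-closure in the norm $\|\cdot\|_{J^{(0)}}$, which controls the $H^1_0$ norm on any bounded $y$-slab, and $f'$ is compactly supported). Extracting the exact value $1/4$ in the straight case, rather than just the bound, is the only other step requiring a little care, and the explicit constant test function above handles it.
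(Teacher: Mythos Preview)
Your argument for the sheared case has a genuine gap. You move to $J^{(0)}_{s_0}$ and assert that $J^{(0)}_{s_0}[\psi]=\tfrac14\|\psi\|^2_{\mathcal{H}_0}$ ``forces each of the non-negative terms in \eqref{formJ0} beyond the $\tfrac14$-multiple to vanish''. But \eqref{formJ0} contains no $+\tfrac14\|\psi\|^2$ term: the three non-negative pieces $\|\p_y\psi-\sigma_s\p_z\psi\|^2$, $\tfrac{1}{16}\|y\psi\|^2$ and $e^s(\|\p_z\psi\|^2-E_1\|\psi\|^2)$ merely \emph{sum} to $\tfrac14\|\psi\|^2>0$, and nothing prevents the mass from sitting in any one of them. (Concretely, for the straight strip the minimiser is $\psi=e^{-y^2/8}\mathcal{J}_1(z)$, for which $\|\p_y\psi\|^2=\tfrac{1}{16}\|y\psi\|^2=\tfrac18\|\psi\|^2\neq0$.) Consequently the bracket $\|\p_x u-f'\p_z u\|^2+\|\p_z u\|^2-E_1\|u\|^2$ need not vanish, the Hardy inequality only gives $J^{(0)}_{s_0}[\psi]\geq e^{s_0}c_H\|\rho u\|^2$, and this is no contradiction to $J^{(0)}_{s_0}[\psi]=\tfrac14\|\psi\|^2$.

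The paper avoids this by staying in $J^{(-1)}_s$, where the $+\tfrac14\|w\|^2_{\mathcal{H}_{-1}}$ \emph{is} explicit: then $J^{(-1)}_{s_0}[w]=\tfrac14\|w\|^2_{\mathcal{H}_{-1}}$ genuinely forces
\[
\|\p_y w-\sigma_{s_0}\p_z w\|^2_{\mathcal{H}_{-1}}=0
\quad\text{and}\quad
\|\p_z w\|^2_{\mathcal{H}_{-1}}-E_1\|w\|^2_{\mathcal{H}_{-1}}=0.
\]
At this point the Hardy inequality is not used (and would not apply directly, since it is stated in $\mathcal{H}_0$, not in the weighted space $\mathcal{H}_{-1}$). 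Instead, the second equality and the pointwise Poincar\'e equality case give $w(y,z)=\varphi(y)\mathcal{J}_1(z)$; substituting into the first equality and using $\int_0^d\mathcal{J}_1\mathcal{J}_1'=0$ yields $\varphi'\equiv0$ and $\sigma_{s_0}\varphi\equiv0$, hence $\sigma_{s_0}\equiv0$, i.e.\ $f'\equiv0$. This elementary equality-case analysis replaces your Hardy step.

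A smaller point on the straight case: your first claim that ``$w=\mathcal{J}_1(z)$ (constant in $y$) lies in $\mathcal{H}^1_1$'' is false, since $\int_{\mathbb{R}}e^{y^2/4}\,\mathrm{d}y=\infty$. Your cutoff alternative works, but is unnecessary: the whole reason for introducing $\mathcal{H}_{-1}$ is that $\mathcal{J}_1(z)$ \emph{does} lie in $\mathcal{D}(J^{(-1)}_s)$ (the weight is now $e^{-y^2/4}$), and the paper simply tests with it to get $\mu(s)\leq\tfrac14$ in one line.
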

\begin{proof}
Using \eqref{formJ-1} we get $J^{(-1)}_s[w]-\frac{1}{4}\|w\|^2_{\mathcal{H}_{-1}} \geq 0$ for every $w \in \mathcal{D}(J^{(-1)}_s)$ and therefore, using the minimax principle we obtain $\mu(s)\geq 1/4$ for both sheared and straight strip. 

If the strip is not sheared, then $\sigma_s$ is identically zero in $\mathbb{R}$ for all $s \in [0, \infty).$ Choosing $w(y,z) = \mathcal{J}_1(z),$ where $\mathcal{J}_1$ is again the first eigenvector of $-\Delta^{(0,d)}_D$ corresponding to the eigenvalue $E_1,$ we get for straight strip
\begin{equation*}
  \mathcal{J}^{(-1)}_s[\mathcal{J}_1] = +e^s\|\p_z\mathcal{J}_1\|^2_{\mathcal{H}_{-1}}-E_1e^s\|\mathcal{J}_1\|^2_{\mathcal{H}_{-1}}+\frac{1}{4}\|\mathcal{J}_1\|^2_{\mathcal{H}_{-1}}  = \frac{1}{4}\|\mathcal{J}_1\|^2_{\mathcal{H}_{-1}}.
\end{equation*}
Using again the minimax principle we get $\mu(s) \leq 1/4.$ 

It remains to show that $\mu(s) = 1/4$ implies that the strip is straight. Recalling the minimax principle we have
\begin{equation}
\label{minimaxMu=1/4}
    1/4 = \mu(s) = \min_{w \in \mathcal{D}(\mathcal{J}^{(-1)}_s)\setminus\{0\}}\frac{\mathcal{J}^{(-1)}_s[w]}{\|w\|_{\mathcal{H}_{-1}}}.
\end{equation}
Using the Poincar\'e inequality on $(0,d)$ (cf. \eqref{poincare}) and the Fubini's theorem we have 
\[\|\p_z w\|^2_{\mathcal{H}_{-1}} \geq E_1\|w\|^2_{\mathcal{H}_{-1}}\]
for every $w \in \mathcal{D}(\mathcal{J}^{(-1)}_s).$ Therefore, the minimum \eqref{minimaxMu=1/4} is attained by $w \in \mathcal{D}(\mathcal{J}^{(-1)}_s)$ satisfying
\begin{equation}
\label{minimumAttained}
    \|\p_y w-\sigma_s\p_z w\|^2_{\mathcal{H}_{-1}} = 0 \quad \wedge \quad \|\p_zw\|^2_{\mathcal{H}_{-1}}-E_1\|w\|^2_{\mathcal{H}_{-1}} = 0.
\end{equation}
Next, we use the decomposition $w(y,z) = \varphi(y)\mathcal{J}_1(z)+\phi (y,z)$ (cf. \eqref{decomposition}). Therefore, the second equality in \eqref{minimumAttained} implies that $\phi = 0.$ The first equality is then equivalent to 
\begin{align*}
    &\|\dot{\varphi}\|^2_{L^2(\mathbb{R},K^{-1})}\|\mathcal{J}_1\|^2_{L^2((0,d))} +  \|\sigma_s\varphi\|^2_{L^2(\mathbb{R},K^{-1})}\|\p_z\mathcal{J}_1\|^2_{L^2((0,d))}  = 0,
\end{align*}
 and thus $\varphi$ must be a constant and 
\[\|\sigma_s\|^2_{L^2(\mathbb{R},K^{-1})} = 0.\]
The last equality implies that $f$ is a constant and thus the strip is not sheared.
\end{proof}
Now we are able to prove Theorem \ref{mainResultRefolmulated}.
\begin{proof}[Proof of Theorem \ref{mainResultRefolmulated}]
Using Proposition \ref{PropositionAlternative} and Corollary \ref{Corollary3/4} we have
\begin{equation*}
    \gamma := \inf_{s \in [0,\infty)} \mu(s) - 1/4
\end{equation*}
is positive if and only if $\Omega_f$ is sheared. In any case \eqref{energyEstimate} implies
\begin{equation*}
    \|w(s)\|_{\mathcal{H}_1} \leq \|w_0\|_{\mathcal{H}_1}e^{-(\gamma+1/4)s}
\end{equation*}
for every $s \in [0,\infty).$ Using this estimate instead of \eqref{energyEstimate3/4} and using the same procedure as in Subsection \ref{Sec.ImprovedDecay} below \eqref{energyEstimate3/4} we obtain
\begin{equation*}
    \|S(t)\|_{L^2(\Omega, K) \mapsto L^2(\Omega)} \leq  (1+t)^{-(\gamma + 1/4)}
\end{equation*}
for every $t \in [0,\infty).$ This is equivalent to \eqref{mainAlternativeInequality} and $\gamma$ is positive if and only if $\Omega_f$ is sheared. 
\end{proof}
\section{Conclusion}
\label{Conclusion}
In this paper we proved Conjecture \ref{mainConjecture} of \cite{TwistedTubesZuazua} in the case of locally sheared unbounded strips introduced in \cite{ShearBriet}. 
More specifically, we showed that the decay rate of the heat semigroup corresponding to the Dirichlet Laplacian in the unbounded sheared strips is at least three times better that in the case of the straight strip. 
The important ingredient in our proof was
the existence of a geometrically induced Hardy inequality
established in \cite{ShearBriet}.
 
Using the stochastic interpretation of the heat equation, 
our results demonstrate that the expectation lifetime of 
the Brownian particle is made shorter by shearing the strip.
The same conclusion can be made for the effectiveness of 
the temperature cool down of a classical medium enclosed in the sheared strip.
Finally, the heat equation is important for understanding quantum systems as well,
despite the dynamics is intrinsically governed by the Schr\"odinger equation
(cf.~\cite{Simon_1982}).

We conjecture that the inequality of Theorem~\ref{mainResult}
can be replaced by equality (i.e. $\Gamma(\Omega_f)=3/4$ if~$f'\not=0$),
for the decay rates obtained by self-similarity transforms 
are known to be sharp in other circumstances 
(cf.~\cite{Duro-Zuazua_1999,Vazquez-Zuazua_2000}).
An alternative approach to the improved decay rate is given 
by the pointwise estimates for the heat kernel 
performed in~\cite{GrilloKovarik}
and it is expected that the same can be done for the present model too.

Throughout the paper we assumed that $f'$ has compact support. We expect that this hypothesis can be replaced by a vanishing  of $f'$ at infinity 
to get Theorem \ref{mainResult} and Theorem \ref{alternativeResult}. This assumption is known to be enough to ensure the existence of Hardy inequality. However, it is possible that a slower decay of $f'$ at infinity will make the effect of shearing stronger. In particular, is it possible that $\Gamma(\Omega_f)$ is strictly greater that $3/4$ if the strip is sheared and $f'$ tends to zero very slowly at infinity? A different situation will appear if $f'$ does not vanish at infinity. Then the spectrum of the Dirichlet Laplacian can start strictly above $E_1$ (cf. \cite[Theorem 1]{ShearBriet}) and thus there can be even an extra exponential decay rate for the heat semigroup. 
In this case, it is again more natural to study a sub-exponential
decay rate for the semigroup shifted by the lowest point in its spectrum.
Similar spectral-geometric effects have been recently observed
in tubular geometries with globally and asymptotically diverging twisting
\cite{K3,withKolb,K11,BHK,Barseghyan-Khrabustovskyi_2018,KTdA2,KZ3} 
and the study of the associated heat equation constitutes
a challenging open problem.

\subsection*{Acknowledgment}
The research was partially supported 
by the GACR grants No.~18-08835S and 20-17749X.

\bibliographystyle{abbrv}
\bibliography{Sheared}

\begin{thebibliography}{10}

\bibitem{Barseghyan-Khrabustovskyi_2018}
D.~Barseghyan and A.~Khrabustovskyi.
\newblock {S}pectral estimates for {D}irichlet {L}aplacian on tubes with
  exploding twisting velocity.
\newblock {\em Oper. Matrices}, 13:311--322, 2019.

\bibitem{ShearBriet}
P.~Briet, H.~Abdou-Soimadou, and D.~Krej\v{c}i\v{r}\'{i}k.
\newblock {S}pectral analysis of sheared nanoribbons.
\newblock {\em Z. Angew. Math. Phys.}, 70(2), 2019.

\bibitem{BHK}
P.~Briet, H.~Hammedi, and D.~Krej\v{c}i\v{r}\'ik.
\newblock {H}ardy inequalities in globally twisted waveguides.
\newblock {\em Lett. Math. Phys.}, 105:939--958, 2015.

\bibitem{MagneticCazacu}
C.~Cazacu and D.~Krej\v{c}i\v{r}\'{i}k.
\newblock {T}he {H}ardy inequality and the heat equation with magnetic field in
  any dimension.
\newblock {\em Comm. Partial Differential Equations}, 41(7):1056–1088, 2016.

\bibitem{SpectralTheoryDavies}
E.~B. Davies.
\newblock {\em {S}pectral theory and differential operators}.
\newblock Cambridge Univ. Press, 1996.

\bibitem{Duro-Zuazua_1999}
G.~Duro and E.~Zuazua.
\newblock {L}arge time behavior for convection-diffusion equations in
  {$\mathbb{R}^N$} with asymptotically constant diffusion.
\newblock {\em Commun. in Partial Differential Equations}, 24:1283--1340, 1999.

\bibitem{EscobedoKavian}
M.~Escobedo and O.~Kavian.
\newblock {V}ariational problems related to self-similar solutions of the heat
  equation.
\newblock {\em Nonlinear Anal. Theor.}, 11(10):1103–1133, 1987.

\bibitem{withPinchover}
M.~Fraas, D.~Krej\v{c}i\v{r}\'{i}k, and Y.~Pinchover.
\newblock {O}n some strong ratio limit theorems for heat kernels.
\newblock {\em Discrete Contin. Dynam. Systems A}, 28(2):495–509, 2010.

\bibitem{HarmonicsOscilator}
D.~J. Griffiths and D.~F. Schroeter.
\newblock {\em {I}ntroduction to quantum mechanics}.
\newblock Cambridge University Press, 2018.

\bibitem{GrilloKovarik}
G.~Grillo, H.~Kova\v{r}\'{i}k, and Y.~Pinchover.
\newblock {S}harp two-sided heat kernel estimates of twisted tubes and
  applications.
\newblock {\em Arch. Ration. Mech. Anal.}, 213(1):215–243, 2014.

\bibitem{Kato}
T.~Kato.
\newblock {\em {P}erturbation theory for linear operators}.
\newblock Springer, 1995.

\bibitem{withKolb}
M.~Kolb and D.~Krej\v{c}i\v{r}\'{i}k.
\newblock {T}he {B}rownian traveller on manifolds.
\newblock {\em J. Spectr. Theory}, 4(2):235–281, 2014.

\bibitem{K3}
D.~Krej\v{c}i\v{r}\'{\i}k.
\newblock {H}ardy inequalities in strips on ruled surfaces.
\newblock {\em J. Inequal. Appl.}, 2006:Article ID 46409, 10 pages, 2006.

\bibitem{KrejcirikMagnetic}
D.~Krej\v{c}i\v{r}\'{i}k.
\newblock {T}he improved decay rate for the heat semigroup with local magnetic
  field in the plane.
\newblock {\em Calc. Var. Partial Differ. Equ.}, 47(1-2):207–226, Mar 2012.

\bibitem{K11}
D.~Krej\v{c}i\v{r}\'{\i}k.
\newblock {W}aveguides with asymptotically diverging twisting.
\newblock {\em Appl. Math. Lett.}, 46:7--10, 2015.

\bibitem{KrejcirikWedges}
D.~Krej\v{c}i\v{r}\'{i}k.
\newblock {T}he {H}ardy inequality and the heat flow in curved wedges.
\newblock {\em Portugal. Math.}, 73(2):91–113, 2016.

\bibitem{KTdA2}
D.~Krej\v{c}i\v{r}\'{\i}k and R.~Tiedra~de Aldecoa.
\newblock {R}uled strips with asymptotically diverging twisting.
\newblock {\em Ann. H. Poincar\'e}, 19:2069--2086, 2018.

\bibitem{KZ3}
D.~Krej\v{c}i\v{r}\'{\i}k and K.~Zahradov{\'a}.
\newblock {Q}uantum strips in higher dimensions.
\newblock {\em Oper. Matrices}.
\newblock to appear.

\bibitem{TwistedTubesZuazua}
D.~Krej\v{c}i\v{r}\'{i}k and E.~Zuazua.
\newblock {T}he {H}ardy inequality and the heat equation in twisted tubes.
\newblock {\em J. Math. Pures Appl.}, 94(3):277–303, 2010.

\bibitem{DirichletNeumanZuazua}
D.~Krej\v{c}i\v{r}\'{i}k and E.~Zuazua.
\newblock {T}he asymptotic behaviour of the heat equation in a twisted
  {D}irichlet–{N}eumann waveguide.
\newblock {\em J. Differential Equations}, 250(5):2334–2346, 2011.

\bibitem{Lions}
J.-L. Lions and E.~Magenes.
\newblock {\em {N}on-homogeneous boundary value problems and applications I}.
\newblock Springer, 1972.

\bibitem{PazySemigroups}
A.~Pazy.
\newblock {\em {S}emigroups of linear operators and applications to partial
  differential equations}.
\newblock Springer-verlag, 1983.

\bibitem{Simon_1982}
B.~Simon.
\newblock {S}chr{\"o}dinger semigroups.
\newblock {\em Bull. Amer. Math. Soc. (N.S.)}, 7(3):447--526, 1982.

\bibitem{Vazquez-Zuazua_2000}
J.~L. V{\'a}zquez and E.~Zuazua.
\newblock {T}he {H}ardy inequality and the asymptotic behaviour of the heat
  equation with an inverse-square potential.
\newblock {\em J. Funct. Anal.}, 173:103--153, 2000.

\end{thebibliography}

\end{document}